\documentclass[12pt]{article}
\usepackage{enumitem}

\usepackage[margin=1in]{geometry}

\usepackage{natbib}

\usepackage[T1]{fontenc}

\usepackage{placeins}

\usepackage{setspace}
\usepackage{amsmath, amssymb, amsthm, mathtools}

\theoremstyle{plain}
\newtheorem{proposition}{Proposition}
\newtheorem{lemma}{Lemma}

\theoremstyle{definition}
\newtheorem{definition}{Definition}
\newtheorem{assumption}{Assumption}

\theoremstyle{remark}
\newtheorem{remark}{Remark}

\usepackage{graphicx}
\usepackage{float}
\usepackage{booktabs}
\usepackage{caption}
\usepackage{subcaption}
\usepackage[colorlinks=true, citecolor=blue, linkcolor=blue, urlcolor=blue]{hyperref}

\newcommand{\dP}{\partial P}
\newcommand{\dQD}{\partial Q^D}
\newcommand{\dS}{\partial S}
\newcommand{\Slow}{S_{\text{low}}}
\newcommand{\Shigh}{S_{\text{high}}}

\newcommand{\Sunstable}{S_{\text{unstable}}}
\newcommand{\SFB}{S^{FB}}
\newcommand{\Ssmall}{S_{\text{small}}}

\begin{document}

\title{\textbf{Coordination Failures and Stackelberg Leadership in Housing Development with Network Effects}}

\author{
Northwestern University \\
Mathematical Methods in the Social Sciences \\
Vaibhav Rangan\thanks{I would like to thank my advisor, Professor Ben Golub, for his invaluable and tireless guidance and support on my project, as well as for his generous provision of Refine.ink credits. Email: \texttt{vaibhavrangan2027@u.northwestern.edu}.}
}
\date{May 2026}

\maketitle

\begin{abstract}
\noindent I study coordination failures in housing development markets with network effects, where the value of building depends on aggregate supply. When network effects are sufficiently strong and convex, multiple equilibria arise: a low-supply coordination failure and a high-supply outcome. Without a coordination mechanism, equilibrium is indeterminate. I introduce a large developer who moves first in a Stackelberg game, committing to housing supply before atomistic developers make entry decisions. The main result is that the large developer always commits at least to the high-supply equilibrium, eliminating the coordination failure by pushing past the unstable threshold that separates the low and high outcomes. The result is unconditional; it holds for general demand functions and cost distributions, and does not depend on which stable continuation equilibrium materializes. The leader's commitment inverts standard monopoly intuition: first-mover commitment can improve welfare by resolving a coordination problem that atomistic markets cannot solve on their own. I also characterize when the developer builds beyond the high equilibrium into a monopoly region, and show that the market underprovides housing relative to the social optimum.

\medskip
\noindent \textbf{Keywords:} coordination games, housing supply, network effects, Stackelberg leadership, multiple equilibria

\medskip
\noindent \textbf{JEL Codes:} C72, R31, D62, L13, R52

\end{abstract}

\newpage
\tableofcontents
\newpage

\section{Introduction}
\label{sec:intro}
Housing development is fundamentally a coordination game. Developers commit large amounts of up-front capital before knowing whether residents will follow, and prospective residents value density. When more people move into an area, businesses, transit, public services, and other amenities follow, raising willingness to pay for all units \citep{rossi_hansberg_sarte_owens_2010, guerrieri_hartley_hurst_2013}. Each developer's return therefore depends on how many other developers build, but no individual developer controls or observes others' plans before committing. Competitor entry operates as a positive payoff complementarity rather than a signal of latent demand.

When network effects are sufficiently strong and convex, this coordination problem generates multiple equilibria. A high-supply equilibrium exists in which density rises, amenities follow, and the market sustains itself. A low-supply equilibrium also exists, in which insufficient density makes entry unprofitable and pessimism persists. The same fundamentals are consistent with both outcomes, and the market provides no mechanism to select between them. This indeterminacy has real consequences. \citet{freemark_2020} documents that transit-oriented upzoning in Chicago produced no increase in construction over five years despite rising property values, consistent with regulatory relaxation capitalizing into land prices without triggering coordinated development. By contrast, Auckland's 2016 upzoning of three-quarters of its residential land was followed by a sustained construction boom, adding an estimated 20,000 dwellings in five years \citep{greenaway_mcgrevy_phillips_2023}. The divergence is difficult to explain through regulation alone; a coordination-based account, in which Auckland crossed a threshold that Chicago did not, offers a natural explanation.

This paper formalizes the coordination problem as a two-stage game of complete information. A neighborhood contains a fixed stock of buildable sites with publicly observable construction costs. A single large developer with the scale to assemble sites in advance moves first, choosing which sites to develop, followed by a continuum of atomistic developers who build the remaining sites. Under sufficient convexity of network effects, the Stage 2 game admits multiple stable equilibria. I frame the large developer as a Stackelberg leader and ask whether their commitment can eliminate the coordination failure. The main result is that the large developer strictly prefers committing at least to the high-supply level, which requires pushing past the unstable threshold separating the low and high equilibria. The result is unconditional: the developer's optimal commitment dominates every alternative under every possible Stage 2 realization, requiring no assumption about which stable equilibrium materializes.

Resolving the coordination failure with a large mover yields tangible welfare benefits. I show that any atomistic market equilibrium underprovisions housing relative to the social optimum, because developers cannot charge existing residents for the network benefits their construction generates. The Stackelberg leader partially closes this gap by eliminating the low equilibrium, strictly improving welfare. However, even the high equilibrium falls short of the planner's optimum since the network externality remains uninternalized. I further characterize conditions under which the large developer builds beyond the high equilibrium into a monopoly region, and show that this occurs precisely when network effects dominate the standard demand-depressing effect of additional supply.

The remainder of the paper is organized as follows. Section~\ref{sec:literature} reviews related literature on coordination games, housing network effects, and Stackelberg models. Section~\ref{sec:model} presents the model. Section~\ref{sec:atomistic} characterizes the atomistic equilibrium and its multiplicity. Section~\ref{sec:stackelberg} defines the Stackelberg equilibrium, states the main results, and proves them. Section~\ref{sec:welfare} develops the welfare analysis. Section~\ref{sec:discussion} discusses limitations and extensions. Section~\ref{sec:policy} draws policy implications. Section~\ref{sec:conclusion} concludes.

\section{Literature Review}
\label{sec:literature}
Housing development is subject to a coordination problem that policy alone cannot solve. Upzoning, express permitting, and deregulation may be necessary to hasten development, but they are rarely sufficient. Developers commit to large capital expenditures only when an area has already attracted consumers, which itself requires that developers have entered. The result is a 'crane' effect: each market participant waits for the others to commit first, and the market may settle below its potential.

The theoretical foundation lies in the literature on coordination games with strategic complementarities. \citet{cooper_john_1988} formalized coordination failure: when agents' optimal actions are increasing in others' actions and generate positive spillovers, multiple Pareto-ranked equilibria can arise, with the 'low' equilibrium self-fulfilling through pessimism. \citet{murphy_shleifer_vishny_1989} applied this logic to industrialization, arguing that a 'big push', a sufficiently large, coordinated investment, can shift an economy from a no-industrialization trap to coordinated industrialization. This paper transposes the big-push logic from industrial to housing development markets.

The network effects underlying this paper are empirically grounded. \citet{rossi_hansberg_sarte_owens_2010} estimate housing externalities directly using targeted revitalization programs in Richmond, Virginia, finding spillovers that decay by half every 990 feet and raise land prices in targeted neighborhoods 2–5 percent annually above controls. \citet{glaeser_kolko_saiz_2001} provide the theoretical case for increasing returns to density: density supports diverse services, public goods, low transport costs, and frequent social contact, all of which raise residents' willingness to pay. The convexity of these returns finds support in \citet{couture_handbury_2020}, who provide evidence that initial levels of non-tradable amenities better explain young-professional urbanization than changes in those levels do, consistent with convex returns to density. The multiple-equilibrium structure derived here has direct empirical counterparts. \citet{card_mas_rothstein_2008} use regression discontinuity on Census tract data to show that neighborhood racial composition exhibits sharp tipping behavior, with tipping points between five and twenty percent minority share. Their underlying bid-rent model generates the same three-equilibrium structure (two stable, one unstable) that arises in Section~\ref{sec:atomistic} from convex network effects, both validating the empirical relevance of multiplicity and providing a methodological template---detection of the unstable equilibrium via regression discontinuity---that Section~\ref{subsec:empirical} adapts for supply rather than composition dynamics.

The coordination mechanism here, irreversible investment that reshapes the subsequent game, draws on \citet{schelling_1960}'s foundational analysis of commitment as a strategic device. In standard Stackelberg models, the leader's first-mover advantage typically reduces welfare through output restriction, generating a deadweight loss that offsets any coordination gains. The welfare improvement derived here is an instance of the theory of the second best \citep{lipsey_lancaster_1956}. When network externalities already prevent the first-best outcome, concentrating commitment in a first mover can be welfare-improving rather than welfare-reducing. Sections~\ref{sec:stackelberg} and~\ref{sec:welfare} develop the specific mechanism and show how the competitive atomistic fringe disciplines the leader's markup.

Three bodies of prior work warrant direct comparison. The first is the literature on multi-sided platforms. \citet{rochet_tirole_2003, rochet_tirole_2006} and \citet{armstrong_2006} establish that platforms with cross-group externalities face coordination problems admitting multiple equilibria and characterize optimal pricing across user groups. \citet{weyl_2010} shows that a monopoly platform can use insulating tariffs (prices contingent on realized participation on each side) to implement any desired allocation, decomposing the resulting welfare loss into a classical markup distortion and a \citet{spence_1975} distortion that internalizes network externalities only to marginal users. The platform mechanism does not transfer to housing. Units sell at a single market-clearing price, so participation-contingent pricing is unavailable; physical and irreversible quantity commitment is credible in a way platform tariffs often are not; and the competitive atomistic fringe pins total supply at the equilibrium price in the regime where the fringe enters, eliminating the output-restriction incentive that drives the markup distortion in \citet{weyl_2010}. The coordination gain here therefore accrues without an accompanying markup welfare loss, distinguishing the residential quantity-commitment mechanism from the platform-pricing mechanisms in this literature.

The second near-predecessor is \citet{economides_1996}, who observes in a footnote that additional competitors' upward shift in the industry response curve can eliminate low-sales equilibria in a Stackelberg quantity-setting model with network externalities. The observation is made in passing as a corollary of his main result on the leader's incentive to invite entry. It is not accompanied by equilibrium characterization, welfare analysis, or conditions governing when elimination occurs. This paper takes the phenomenon as its central object. It characterizes the full equilibrium correspondence faced by the Stackelberg leader, proves that the leader strictly prefers commitments that eliminate the low equilibrium under general demand and cost distributions, and develops the absent welfare analysis. The closest urban-economics predecessor is \citet{helsley_strange_1994}, who model city formation with Stackelberg commitment by developers providing public goods. This paper differs in modeling positive network externalities rather than public goods, focusing on within-neighborhood coordination rather than inter-city competition, and deriving the stronger result that the leader always strictly prefers the high equilibrium. The closest residential-housing predecessor is \citet{owens_rossi_hansberg_sarte_2017}, who show in a calibrated spatial model of Detroit that government development guarantees can eliminate a vacant-neighborhood equilibrium. Their result relies on a non-market guarantor and is established computationally for a specific metro area; this paper shows that a private profit-maximizing developer endogenously plays the analogous role in a tractable applied-theory framework, and characterizes the parameter condition $\gamma \alpha S_{\text{high}}^{\alpha - 1} > 1$ under which the developer's expansion further improves welfare beyond the high equilibrium. No prior paper combines these elements---private market-driven Stackelberg commitment, residential housing with multiplicity from network externalities, and an explicit parameter condition governing when the coordination gain is accompanied by further welfare improvement---into a single model.

\section{Model}
\label{sec:model}
The preceding section situated this paper within the literatures on coordination games, housing network effects, and Stackelberg leadership. This section presents the model: a two-stage game in which a large developer commits to a quantity of housing before atomistic developers make entry decisions, with demand that depends on aggregate supply through network effects.
\subsection{Primitives}
\label{subsec:primitives}

Consider a neighborhood containing a fixed stock of potential building sites, each capable of producing one homogeneous unit of housing. Sites are heterogeneous in construction cost, reflecting differences in site characteristics. Costs are publicly observable, or sufficiently estimable from site characteristics that we treat them as such. Indexing sites by their construction cost $c$, the distribution of costs across the stock is described by a CDF $G$ with density $g$ on $[0, \bar{c}]$. The total mass of sites is normalized to 1, so any supply level $S$ satisfies $S \in [0,1]$.

There are two types of agents:

\begin{itemize}
    \item \textbf{One large developer} (the Stackelberg leader), denoted $L$. The large developer leverages scale and accumulated expertise to evaluate site profitability faster and secure construction inputs at lower cost, enabling them to commit to development before smaller competitors can respond.
    \item \textbf{A continuum of atomistic developers}, each capable of building one site. Lacking the scale to accelerate evaluation or secure inputs at favorable cost, atomistic developers cannot commit before the leader moves. They build from whatever remains of the site stock after the leader has committed.
\end{itemize}

Aggregate inverse demand for housing is $P(Q^D, S)$, where $Q^D$ is aggregate quantity demanded and $S$ is total supply. The function $P$ satisfies:

\begin{assumption}[Demand Structure]\label{as:demand}
The demand function satisfies $\frac{\dP}{\dQD} < 0$ (downward-sloping) and $\frac{\dP}{\dS} > 0$ (positive network effects: more supply raises willingness to pay).
\end{assumption}

\begin{assumption}[Regularity]\label{ass:regularity}
The cost CDF $G$ is continuously differentiable with density $g > 0$ on the interior of $[0, \bar{c}]$.
\end{assumption}

The market-clearing condition is $Q^D = S$, so the equilibrium price at total supply $S$ is $P(S,S)$. The cost distribution $G$ and the demand function $P$ are common knowledge.

\subsection{Timing and Rules of the Game}
\label{subsec:timing}

The game has two periods, $t = 0$ and $t = 1$. The rules of the game are as follows: 
\subsection*{Period $t = 0$: Large Developer's Move}
At period $t = 0$, the large developer chooses a set of sites of total mass $S_L$ to acquire and develop. Because output is homogeneous and the leader claims all revenue, profit-maximization implies they select the lowest-cost sites available, those with $c \in [0, G^{-1}(S_L)]$.\footnote{Formally, the leader chooses a measurable subset $A_L \subseteq [0, \bar{c}]$ with $G(A_L) = S_L$. Since sites are homogeneous in output and payoffs are decreasing in cost, any optimal choice is equivalent to the lowest-cost interval. The leader's action can therefore be summarized by the scalar $S_L$ without loss of generality.} The leader incurs total construction cost
\begin{equation*}
    C_L(S_L) = \int_0^{G^{-1}(S_L)} c\, g(c)\, dc.
\end{equation*}
The commitment is observable to all players before period $t = 1$. Section~\ref{subsec:alt_cost} considers an alternative environment in which the leader produces using a separate constant-cost technology.

\subsection*{Period $t = 1$: Atomistic Developers' Moves}
After observing $S_L$, each remaining site, those with cost $c \in (G^{-1}(S_L), \bar{c}]$, is assigned to a distinct atomistic developer. Atomistic developers are all price takers. Atomistic developer $j$, assigned a site with cost $c_j > G^{-1}(S_L)$, makes a binary choice: \[
a_j \in \{0,1\}
\]
where $a_j = 1$ means build and $a_j = 0$ means do not build. If developer $j$ builds, they incur cost $c_j$ and receive the market-clearing price of a unit. All atomistic developers choose simultaneously.

\begin{figure}[h!]
    \centering
    \includegraphics[width=0.8\textwidth]{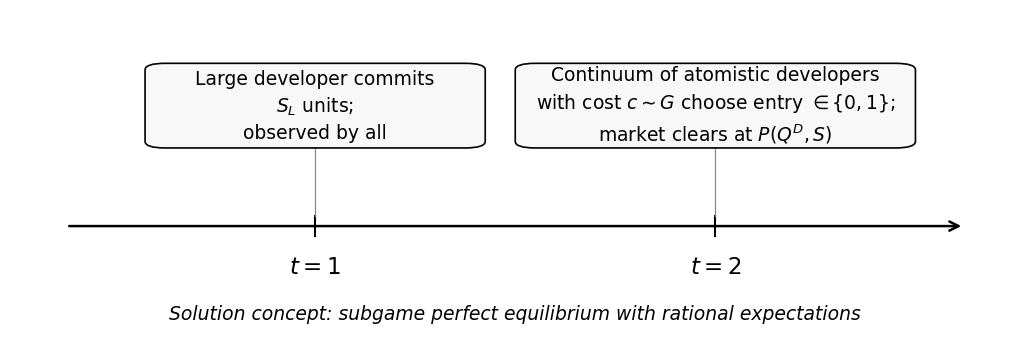}
    \caption{Timing of the game.}
    \label{fig:timeline}
\end{figure}

\subsubsection*{Market Clearing}

After both periods, the market clears. Define: \[
S_{\text{small}} = \int_{\{j: a_j = 1\}} dj
\]
as the mass of atomistic developers who chose to build in period $t = 1$. Total supply is therefore: \[
S = S_{\text{small}} + S_L
\]
All units, regardless of who built them, sell at the single market clearing price $P(S,S)$. There is no price discrimination; all units sell at $P(S,S)$. 

\subsection{Payoffs}
\label{subsec:payoffs}

For any configuration of choices $(S_L, \{a_j\}_j)$, payoffs are determined as follows.

\subsubsection*{Large Developer's Payoff}

\begin{equation}\label{eq:piL}
    \pi_L = p^* \cdot S_L - C_L(S_L) = P(S, S) \cdot S_L - \int_0^{G^{-1}(S_L)} c \, g(c) \, dc
\end{equation}

The large developer earns the market-clearing price on each of its $S_L$ units and pays the cumulative cost of the $S_L$ lowest-cost sites in the stock. Revenue depends on total supply $S$ through the price $P(S,S)$, but costs depend only on the developer's quantity $S_L$.

\subsubsection*{Atomistic Developer $j$'s Payoff}

\begin{equation}\label{eq:pij}
    \pi_j = \begin{cases} p^* - c_j = P(S, S) - c_j & \text{if } a_j = 1 \\ 0 & \text{if } a_j = 0 \end{cases}
\end{equation}

An atomistic developer who builds earns the difference between the market price and their idiosyncratic cost. A developer who does not build earns zero, regardless of market conditions.

\subsubsection*{Consumer Welfare}

\begin{equation}\label{eq:CS}
    CS = \int_0^{S} P(Q^D, S) \, dQ^D - P(S,S) \cdot S
\end{equation}

Consumer welfare is the area under the demand curve up to total supply $S$, minus total expenditure. The network effect enters through the second argument of $P$. Holding quantity fixed, higher aggregate supply raises willingness to pay.
\subsubsection*{Total Welfare}

\begin{equation}\label{eq:welfare_raw}
    W = \int_0^{S} P(Q^D, S) \, dQ^D - \int_0^{G^{-1}(S_L)} c \, g(c) \, dc - \int_{\{j : a_j = 1\}} c_j \, dj
\end{equation}

In equilibrium, the large developer builds units in $[0, G^{-1}(S_L)]$ and atomistic entrants build the remaining lowest-cost units in $(G^{-1}(S_L), G^{-1}(S)]$. Therefore total production cost at supply $S$ is $\int_0^{G^{-1}(S)} c \, g(c) \, dc$ regardless of the split between the large and atomistic developers. Hence, for equilibrium allocations, welfare can be written as:
\begin{equation}\label{eq:welfare}
    W(S) = \int_0^{S} P(Q^D, S) \, dQ^D - \int_0^{G^{-1}(S)} c \, g(c) \, dc
\end{equation}

Prices do not appear in the welfare expression because total price paid, $P(S,S)\cdot S$, is subtracted from gross consumer benefit and added to producer surplus, so it drops out. The reduced-form expression in \eqref{eq:welfare} applies whenever the realized allocation builds the lowest-cost mass $S$ of units; in particular, it applies at equilibrium in the baseline model and in the planner's problem in Section~\ref{subsec:planner}.

\subsection{Outcome Mapping: From Choices to Payoffs}
\label{subsec:outcome_mapping}

The previous section defined each agent's payoff as a function of total supply $S$ and the market-clearing price $P(S,S)$. However, these objects are not chosen directly but are instead determined by the underlying actions $(S_L, \{a_j\}_j)$. This section makes the connection from any action configuration to aggregate quantities, prices, and payoffs. Given any $(S_L, \{a_j\}_j)$:

\begin{enumerate}
    \item Compute $\Ssmall = \int_{\{j : a_j = 1\}} dj$.
    \item Compute $S = S_L + \Ssmall$.
    \item Compute $p^* = P(S, S)$.
    \item Compute $\pi_L = p^* \cdot S_L - \int_0^{G^{-1}(S_L)} c \, g(c) \, dc$.
    \item For each atomistic developer $j$: $\pi_j = a_j \cdot (p^* - c_j)$.
    \item Compute $W = \int_0^{S} P(Q^D, S) \, dQ^D - \int_0^{G^{-1}(S_L)} c \, g(c) \, dc - \int_{\{j : a_j = 1\}} c_j \, dj$.
\end{enumerate}

No element of this mapping depends on beliefs, expectations, or strategic reasoning. It is a mechanical function from actions to outcomes. In equilibrium, the welfare expression in Step~6 simplifies to Equation~\eqref{eq:welfare}, since the realized allocation builds the lowest-cost mass $S$ of units.

\subsection{Specified Demand Function}
\label{subsec:specified_demand}

The results in Sections~\ref{sec:atomistic}--\ref{sec:stackelberg} hold for 
general inverse demand $P(Q^D, S)$ satisfying Assumption~\ref{as:demand}. 
To develop sharper comparative statics and closed-form conditions, we 
will occasionally specialize to:
\begin{equation}\label{eq:specified_demand}
    P(Q^D, S) = \frac{Q_{\max} + \gamma S^\alpha - Q^D}{\beta}
\end{equation}
where $Q_{\max} > 0$ is the demand intercept, $\beta > 0$ governs price 
sensitivity, $\gamma > 0$ captures the strength of network effects, and 
$\alpha \geq 1$ governs the convexity of network benefits. This 
specification satisfies Assumption~\ref{as:demand}: 
$\frac{\partial P}{\partial Q^D} = -\frac{1}{\beta} < 0$ and 
$\frac{\partial P}{\partial S} = \frac{\gamma \alpha S^{\alpha-1}}{\beta} > 0$.

The parameter $\alpha$ plays a key role. When $\alpha = 1$, network 
benefits are linear in supply: each additional unit contributes a 
constant $\gamma/\beta$ to willingness to pay. When $\alpha > 1$, 
network benefits are convex and the marginal contribution of density 
is increasing in $S$. As shown in 
Section~\ref{subsec:multiplicity}, $\alpha > 1$ is the condition 
that generates the S-shaped response curve and hence multiple equilibria.

\section{Atomistic Equilibrium}
\label{sec:atomistic}
 
The previous section defined the game's primitives, timing, rules, and payoffs. This section characterizes the Stage 2 equilibrium concept, the equilibrium behavior of atomistic developers given an arbitrary commitment of $S_L$ by the large developer. We show that when network effects of demand are sufficiently convex, the Stage 2 game admits multiple stable equilibria, creating a coordination problem that the model's primitives alone cannot resolve. 
\subsection{Stage 2 Equilibrium (Period \texorpdfstring{$t = 1$}{t = 1})}
\label{subsec:stage2}

Fix $S_L$. A \textbf{Stage 2 Nash equilibrium} is a profile $\{a_j^*\}_j$ such that each atomistic developer best-responds to aggregate behavior:
\begin{equation}\label{eq:best_response}
    a_j^* = 1 \iff c_j \leq P(S^*, S^*)
\end{equation}
where $S^* = S_L + \Ssmall^*$ and $\Ssmall^* = \int_{\{j : a_j^* = 1\}} dj$.
 
Equivalently, $\Ssmall^*$ is a fixed point of the map
\begin{equation}\label{eq:fixed_point_small}
    f(\Ssmall) = G\big(P(S_L + \Ssmall, \, S_L + \Ssmall)\big) - S_L
\end{equation}
subject to $\Ssmall \geq 0$. For interior equilibria where atomistic developers enter ($S_{\text{small}}^* > 0$), total equilibrium supply $S^*$ satisfies:
\begin{equation}\label{eq:fixed_point}
    S^* = G\big(P(S^*, S^*)\big) \quad \text{subject to} \quad S^* \geq S_L
\end{equation}
This is the same fixed-point condition as the pure atomistic game, $P(S,S) = G^{-1}(S)$, restricted to $S \geq S_L$. Corner equilibria, where $S^* = S_L$, which arise when $G(P(S_L, S_L)) < S_L$, are not captured by this equality and are handled by the more general formulation in Section~\ref{subsec:multiplicity}.
 
\subsection{Multiple Equilibria}
\label{subsec:multiplicity}
 
When network effects are sufficiently strong ($\frac{\dP}{\dS}$ sufficiently large relative to the demand slope $\frac{\dP}{\dQD}$), the fixed-point condition $S = G(P(S,S))$ admits multiple solutions. Denote the stable equilibria by:
\begin{equation}
    \Slow < \Shigh
\end{equation}
with an unstable equilibrium $\Sunstable$ between them.

Throughout, ``equilibrium'' in the Stage 2 continuation game refers to stable equilibrium unless otherwise stated. The unstable fixed point $\Sunstable$ is a threshold separating basins of attraction, not a predicted market outcome.

\begin{figure}[H]
    \centering
    \includegraphics[width=0.5\textwidth]{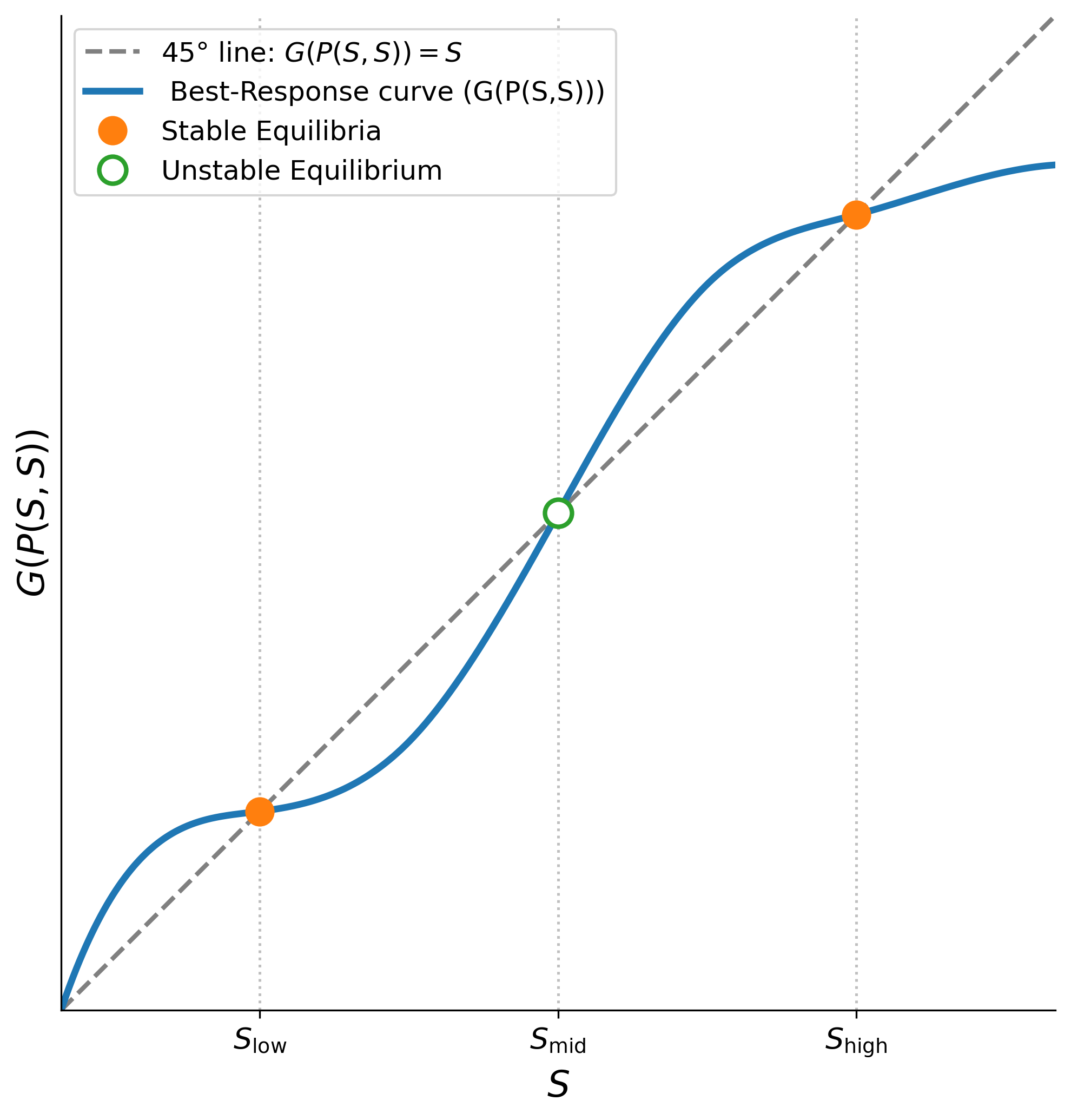}
    \caption{Plot of the response curve $G(P(S,S))$ against the 45-degree line. Intersections are equilibria. At $S_{\text{low}}$ and $S_{\text{high}}$, the curve crosses from above, so perturbations are self-correcting and these equilibria are stable. At $S_{\text{mid}}$, the curve crosses from below so perturbations are self-reinforcing and this equilibrium is unstable.}
    \label{fig:fixed_point}
\end{figure}
 
A Stage 2 equilibrium given $S_L$ is a total supply level $S^*$ satisfying:
\begin{equation}
S^* = \max\bigl(S_L,\; G(P(S^*, S^*))\bigr)
\end{equation}
This nests both interior equilibria, where $S^* = G(P(S^*,S^*)) \geq S_L$, and a corner equilibrium at $S^* = S_L$ with $S_{\text{small}} = 0$ whenever $G(P(S_L, S_L)) < S_L$. The set of stable Stage 2 equilibria given $S_L$ is:
\begin{equation}\label{eq:equilibrium_set}
\mathcal{E}(S_L) = \{S^* : S^* = \max(S_L,\; G(P(S^*, S^*))),\; S^* \text{ is stable}\}
\end{equation}

Four cases arise:

\begin{tabular}{lll}
\toprule
Condition & $\mathcal{E}(S_L)$ & Interpretation \\
\midrule
$S_L \leq S_{\text{low}}$ & $\{S_{\text{low}},\, S_{\text{high}}\}$ & Both interior equilibria feasible \\
$S_{\text{low}} < S_L < S_{\text{unstable}}$ & $\{S_L,\, S_{\text{high}}\}$ & Corner trap and high equilibrium \\
$S_{\text{unstable}} \leq S_L \leq S_{\text{high}}$ & $\{S_{\text{high}}\}$ & Coordination failure eliminated \\
$S_L > S_{\text{high}}$ & $\{S_L\}$ & Monopoly region: $S_{\text{small}} = 0$ \\
\bottomrule
\end{tabular}

In the second case, the response curve $G(P(S,S))$ lies below the 45-degree line at $S = S_L$, so $P(S_L, S_L) < G^{-1}(S_L)$. The market price is below the cost threshold of the cheapest remaining atomistic developer. Zero atomistic entry is therefore self-confirming, and $S^* = S_L$ is a stable corner equilibrium. The coordination failure takes a different form than in Regime~I. The low interior equilibrium has been eliminated, but a no-entry trap persists, and is not resolved until $S_L$ reaches $S_{\text{unstable}}$.

\begin{figure}[htbp]
    \centering
    \includegraphics[width=0.75\textwidth]{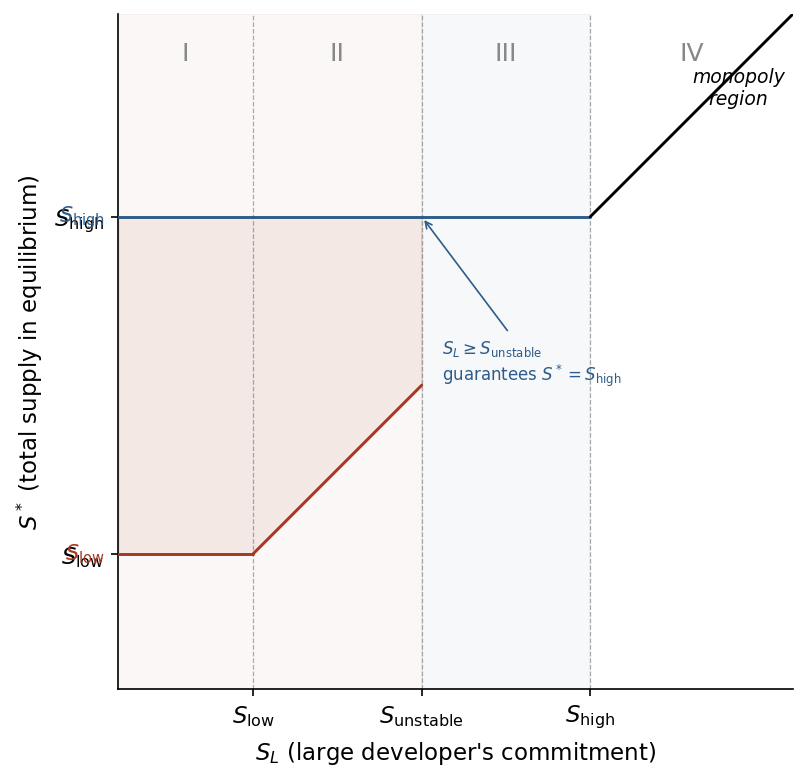}
    \caption{Equilibrium correspondence as a function of the large developer’s commitment $S_L$. For $S_L \le S_{low}$ and $S_{low} < S_L < S_{unstable}$, multiple equilibria exist, including a low equilibrium, a high equilibrium, and (in the intermediate region) a corner equilibrium at $S^* = S_L$. For $S_L \ge S_{unstable}$, the correspondence becomes single-valued with $S^* = S_{high}$. For $S_L > S_{high}$, atomistic entry ceases and the market enters the monopoly region with $S^* = S_L$.}
    \label{fig:regime_partition}
\end{figure}

Figure~\ref{fig:regime_partition} illustrates the equilibrium correspondence across regimes.

\begin{proposition}[Equilibrium Stability]\label{prop:stability}
An equilibrium $S^*$ is stable if and only if $\frac{d}{dS}G(P(S,S))\big|_{S=S^*} < 1$. When the response curve $G(P(S,S))$ is S-shaped (which arises here from sufficient convexity in network effects), the low and high equilibria are stable and the intermediate equilibrium is unstable.
\end{proposition}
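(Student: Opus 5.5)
The plan is to make the notion of stability precise through the natural adjustment process implied by the best-response condition \eqref{eq:best_response}. Define the response map $R(S) = G(P(S,S))$ and consider the tâtonnement dynamics $\dot S = R(S) - S$. Equivalently, one can use the discrete best-response iteration $S_{t+1} = R(S_t)$: if developers expect supply $S_t$, those with $c_j \le P(S_t,S_t)$ build. An interior equilibrium $S^*$ is a zero of $\phi(S) := R(S) - S$, and it is called (locally asymptotically) stable if trajectories starting near $S^*$ converge to it. Under Assumption~\ref{ass:regularity} and differentiability of $P$, $R$ is $C^1$, with
\[
R'(S) = g\bigl(P(S,S)\bigr)\left[\frac{\partial P}{\partial Q^D} + \frac{\partial P}{\partial S}\right].
\]
Both directions of the first claim then reduce to a linearization at $S^*$.

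For sufficiency, suppose $R'(S^*) < 1$. Then $\phi'(S^*) < 0$, and by continuity of $\phi'$ there is a neighborhood of $S^*$ on which $\phi > 0$ to the left of $S^*$ and $\phi < 0$ to the right. Hence $\dot S$ points toward $S^*$ on both sides, and the one-dimensional flow converges monotonically to $S^*$. This is the ``crossing from above'' picture in Figure~\ref{fig:fixed_point}.

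For necessity, suppose $R'(S^*) > 1$. Then $\phi$ changes sign from negative to positive at $S^*$, so perturbations move away from $S^*$ and the equilibrium is unstable. The boundary case $R'(S^*) = 1$ is the main delicate point. I would handle it by adopting the convention that stability means \emph{hyperbolic} (linearized) stability, so that a tangency does not qualify; this makes the ``if and only if'' hold by definition in that case. Alternatively, I would note that such tangencies are non-generic and exclude them by assumption.

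If the discrete iteration is used instead of the flow, I would note that $R' > 0$ whenever network effects dominate, so the map is locally increasing. In that case the iteration is monotone, and convergence holds exactly when $R'(S^*) < 1$; the possibility $R'(S^*) < -1$ is ruled out on the relevant region. If $R'$ could be negative, the continuous-time formulation avoids the issue entirely, so I would present that version as the primary argument.

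For the second claim, take ``S-shaped'' to mean that $R$ is convex then concave and crosses the 45-degree line exactly three times, at $\Slow < \Sunstable < \Shigh$. I would then sign $\phi$ on the intervals between crossings. Near $S=0$, $\phi(0) = G(P(0,0)) \ge 0$, so $\phi > 0$ on $[0,\Slow)$. Since the crossings are transversal and there are exactly three of them, $\phi$ alternates sign: negative on $(\Slow,\Sunstable)$, positive on $(\Sunstable,\Shigh)$, and negative beyond $\Shigh$, which is consistent with $R \le 1$. It follows that $\phi' \le 0$ at $\Slow$ and $\Shigh$ and $\phi' \ge 0$ at $\Sunstable$. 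Transversality makes these inequalities strict, so the first part of the proposition gives the stated classification.

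The main obstacle is justifying the sign at the endpoints, namely $\phi(0) > 0$ and $\phi(1) < 0$. I would secure these by assuming $P(0,0) > 0$, so that some cheap sites are always profitable, and by using $G \le 1$ with $R(1) < 1$ at the top. Finally, I would check that convexity of the network term (for instance $\alpha > 1$ in \eqref{eq:specified_demand}) delivers the convex-then-concave shape when combined with a concave upper tail of $G$.
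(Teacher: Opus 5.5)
Your proposal is correct and is essentially the paper's argument: stability is a restoring force, meaning $G(P(S,S))-S$ is positive just below $S^*$ and negative just above, which is a crossing of the 45-degree line from above. You make this precise through the flow $\dot S = R(S)-S$, and you add a sign-alternation argument for the S-shaped case that the paper only asserts. Your handling of $R'(S^*)=1$ is in fact sharper than the paper's, whose ``if and only if'' fails there under asymptotic stability: $R(S)-S=-(S-S^*)^3$ is stable with $R'(S^*)=1$. Do keep the flow as the primary formulation, because your aside that $R'(S^*)<-1$ is ruled out under discrete best-response iteration is unjustified. $R'(\Shigh)$ has the sign of $\partial P/\partial Q^D+\partial P/\partial S$ at $(\Shigh,\Shigh)$, and Proposition~\ref{prop:main}(c) explicitly allows that sign to be negative. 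A large density $g$ at $G^{-1}(\Shigh)$ can then push $R'(\Shigh)$ below $-1$, which makes the iteration oscillate and diverge even though $R'(\Shigh)<1$.
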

 
\begin{proof}
An equilibrium $S^*$ is stable if small perturbations generate restoring forces. Consider $S$ slightly below $S^*$. If $G(P(S,S)) > S$, then more developers wish to build than currently do, creating upward pressure. If $G(P(S,S)) < S$, fewer developers wish to build, creating downward pressure. Stability requires that deviations below equilibrium generate upward pressure and deviations above generate downward pressure.
 
This occurs when the response curve $G(P(S,S))$ crosses the 45-degree line from above, which requires:
\begin{equation}
    \frac{d}{dS} G(P(S,S)) < 1
\end{equation}
 
Computing this derivative:
\begin{equation}
    \frac{d}{dS} G(P(S,S)) = g(P(S,S)) \cdot \left[\frac{\dP}{\dQD}\bigg|_{(S,S)} + \frac{\dP}{\dS}\bigg|_{(S,S)}\right]
\end{equation}
 
When the network effect $\frac{\dP}{\dS}$ is sufficiently large relative to the demand slope $\left|\frac{\dP}{\dQD}\right|$, this derivative exceeds 1 in a middle region, generating the S-shaped response curve with crossings from below (unstable) flanked by crossings from above (stable) at $\Slow$ and $\Shigh$.
\end{proof}

For the remainder of the paper, we maintain the following condition.

\begin{assumption}[Multiplicity]\label{ass:multiplicity}
The composite map $S \mapsto G(P(S,S))$ admits exactly two stable fixed points 
$S_{\text{low}} < S_{\text{high}}$ separated by exactly one unstable fixed point $S_{\text{unstable}}$, with no further crossings of the 45-degree line.
\end{assumption}

This is the case in which the coordination problem is non-trivial. Section~\ref{subsec:specified_demand} provides sufficient conditions under the specified demand function: $\alpha > 1$ with $\gamma$ sufficiently large relative to $\beta$ and the cost density $g$.

\begin{remark}[Economic interpretation of convexity]
The condition $\alpha > 1$ has a direct empirical interpretation. Neighborhood amenities (retail viability, transit service, local services) exhibit thresholds beyond which density generates disproportionately large gains. Consistent with the evidence in \citet{couture_handbury_2020}, the level of amenity density, rather than incremental changes in it, drives the attractiveness of urban neighborhoods. The S-shaped response curve and the resulting multiplicity are the formal expression of this threshold structure. Below a critical density, the marginal benefit of an additional unit is too small to sustain entry and above it, each new unit further raises willingness to pay, reinforcing entry.
\end{remark}

\subsection{The Indeterminacy Problem}
\label{subsec:indeterminacy}
When $\mathcal{E}(S_L)$ contains multiple elements, the model does not predict which equilibrium will materialize. Atomistic developers face a pure coordination problem: each developer's optimal action depends on what they expect others to do, and the model provides no mechanism to select among the self-consistent outcomes.

This indeterminacy is the central problem the paper addresses. When $S_L \leq S_{\text{low}}$, the market may settle at $S_{\text{low}}$ or $S_{\text{high}}$. When $S_L \in (S_{\text{low}}, S_{\text{unstable}})$, the low interior equilibrium has been eliminated, but a corner trap at $S^* = S_L$ coexists with $S_{\text{high}}$. The developer's commitment has raised the floor but not yet guaranteed coordination on the high equilibrium. In both ranges, the developer's payoff is uncertain, driven by unmodeled factors such as developer sentiment, historical precedent, or sheer luck. The equilibrium supply correspondence given $S_L$ is:
\begin{equation}\label{eq:supply_mapping}
S^*(S_L) \in \begin{cases}
\{S_{\text{low}},\, S_{\text{high}}\} & \text{if } S_L \leq S_{\text{low}} \\
\{S_L,\, S_{\text{high}}\} & \text{if } S_{\text{low}} < S_L < S_{\text{unstable}} \\
\{S_{\text{high}}\} & \text{if } S_{\text{unstable}} \leq S_L \leq S_{\text{high}} \\
\{S_L\} & \text{if } S_L > S_{\text{high}}
\end{cases}
\end{equation}

The key observation is that once $S_L \geq S_{\text{unstable}}$, the indeterminacy vanishes. For $S_L \in [S_{\text{unstable}}, S_{\text{high}}] $, the unique Stage 2 equilibrium is $S_{\text{high}}$, and for $S_L > S_{\text{high}}$, the unique equilibrium is $S_L$ itself. The large developer's commitment to building past the unstable equilibrium threshold eliminates all coordination failures and guarantees an outcome of at least $S_{\text{high}}$.
 
 
\section{Stackelberg Equilibrium}
\label{sec:stackelberg}
 
Section~\ref{sec:atomistic} characterized the Stage 2 equilibrium structure and showed that when network effects are sufficiently convex, the atomistic market admits multiple stable equilibria. This section defines equilibrium for the full two-stage game, states the main results on the large developer's behavior, and then proves them.
 
\subsection{Equilibrium Definition}
\label{subsec:equil_def}

The large developer moves before atomistic developers and must form beliefs about how Stage 2 play will respond to each possible choice of $S_L$. In equilibrium, these beliefs are correct: the developer's conjecture about the mapping from $S_L$ to Stage 2 outcomes coincides with the actual equilibrium correspondence derived in Section~\ref{sec:atomistic}.

\begin{definition}
An equilibrium of the Stackelberg game is a pair $(S_L^*, \sigma^*)$, where $S_L^* \geq 0$ is the large developer's Stage 1 quantity and $\sigma^* = \{a_j^*\}_j$ is a Stage 2 strategy profile, such that:

\begin{enumerate}
\item[(i)] Given $S_L^*$, the profile $\sigma^*$ constitutes a Stage 2 Nash equilibrium inducing total supply $S^* \in \mathcal{E}(S_L^*)$. Each atomistic developer builds if and only if $c_j \leq P(S^*, S^*)$, where $S^* = S_L^* + S_{\text{small}}^*$.

\item[(ii)] For every deviation $S_L \neq S_L^*$ and every continuation $S' \in \mathcal{E}(S_L)$,
\[
\pi_L(S_L^*, S^*) \geq \pi_L(S_L, S').
\]
\end{enumerate}
\end{definition}
 
Condition~(ii) ensures that the developer's choice is robust to the indeterminacy in Stage~2. By definition, the continuation correspondence $\mathcal{E}(S_L)$ includes only stable Stage~2 equilibria. The remaining indeterminacy is therefore among stable continuation equilibria only. Because the developer cannot control which element of $\mathcal{E}(S_L)$ materializes, the dominance criterion requires that $S_L^*$ outperform every alternative even under the most favorable Stage~2 outcome for the deviation. Thus, the main result is best interpreted as a robustness result rather than as an equilibrium-selection result.

\subsection{Main Results}
\label{subsec:main_results}
 
Given any choice of $S_L$, the large developer's profit depends on which 
Stage 2 equilibrium materializes. The equilibrium correspondence 
$\mathcal{E}(S_L)$ is multi-valued when $S_L < S_{\text{unstable}}$ but 
single-valued for $S_L \geq S_{\text{unstable}}$. Our main result shows 
that the developer's optimal choice lies in Regime~II or~III, where the 
correspondence is single-valued, and that this choice dominates every 
alternative $S_L < S_{\text{unstable}}$ under every possible Stage~2 
realization. The dominance criterion in Definition~1 is therefore 
satisfied without imposing any equilibrium selection assumption on the 
Stage~2 game. The developer's profit given $S_L$ and Stage~2 outcome 
$S^* \in \mathcal{E}(S_L)$ is:
\begin{equation}\label{eq:developer_profit}
    \pi_L(S_L) = P(S^*, S^*) \cdot S_L - \int_0^{G^{-1}(S_L)} c \, g(c) \, dc
\end{equation}
where $S^* \in \mathcal{E}(S_L)$. The structure of $\mathcal{E}(S_L)$ from equation~\eqref{eq:supply_mapping} partitions the large developer's choice set into four regimes, analyzed in the proof below. The following proposition characterizes the equilibrium.
 
\begin{proposition}[Equilibrium Characterization]\label{prop:main}
\leavevmode
\begin{enumerate}
    \item[(a)] In any equilibrium, $S_L^* \geq \Shigh$. The large developer always commits at least to the high equilibrium level, which requires pushing past the unstable threshold $\Sunstable$. This eliminates the coordination failure and guarantees total supply of at least $\Shigh$ regardless of which Stage 2 equilibrium would have obtained under any alternative $S_L < \Sunstable$; no equilibrium selection assumption is needed.
    \item[(b)] The large developer's optimal choice is either $S_L^* = \Shigh$ (if high-equilibrium profit dominates monopoly profit) or $S_L^* = S_L^{\text{mon}} > \Shigh$ (if monopoly profit dominates), where $S_L^{\text{mon}}$ satisfies the first-order condition~\eqref{eq:regime3_foc}.
    \item[(c)] The large developer finds it strictly profitable to build past $S_{\text{high}}$ into the monopoly region if and only if network effects dominate the demand curve slope at $S_{\text{high}}$:
    \begin{equation}
        \frac{\partial P}{\partial Q^D}\bigg|_{(S_{\text{high}}, S_{\text{high}})} + \frac{\partial P}{\partial S}\bigg|_{(S_{\text{high}}, S_{\text{high}})} > 0
    \end{equation}
\end{enumerate}
\end{proposition}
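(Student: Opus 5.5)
The plan is to work regime by regime through the supply correspondence \eqref{eq:supply_mapping}, writing the leader's profit as $\pi_L(S_L, S^*) = P(S^*,S^*)S_L - C_L(S_L)$ with $C_L'(S_L) = G^{-1}(S_L)$. Three preliminary facts do most of the work.
\begin{itemize}
\item Along the 45-degree line the equilibrium price is increasing in supply across the stable points. At any interior fixed point $P(S,S) = G^{-1}(S)$ holds, and $G^{-1}$ is strictly increasing. Hence $P(S_{\text{low}},S_{\text{low}}) = G^{-1}(S_{\text{low}}) < G^{-1}(S_{\text{high}}) = P(S_{\text{high}},S_{\text{high}})$.
\item On a corner $S^* = S_L \in (S_{\text{low}},S_{\text{unstable}})$ the response curve lies below the diagonal, so $P(S_L,S_L) < G^{-1}(S_L) < G^{-1}(S_{\text{high}}) = P(S_{\text{high}},S_{\text{high}})$.
\item For $S_L \in [S_{\text{unstable}}, S_{\text{high}}]$ the outcome is always $S_{\text{high}}$, so profit is $p_h S_L - C_L(S_L)$ with $p_h := P(S_{\text{high}},S_{\text{high}})$. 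Its derivative is $p_h - G^{-1}(S_L) \ge 0$, with strict inequality for $S_L<S_{\text{high}}$. Profit is therefore strictly increasing on this interval and maximized at $S_{\text{high}}$.
\end{itemize}

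For part (a), take any $S_L < S_{\text{high}}$ and any continuation $S'\in\mathcal{E}(S_L)$. In every case $P(S',S') \le p_h$, using the two price facts above together with the trivial case $S'=S_{\text{high}}$. Therefore
\[
\pi_L(S_L,S') \le p_h S_L - C_L(S_L) < p_h S_{\text{high}} - C_L(S_{\text{high}}) = \pi_L(S_{\text{high}}, S_{\text{high}}).
\]
The strict inequality holds because $\frac{d}{dx}\bigl[p_h x - C_L(x)\bigr] = p_h - G^{-1}(x) > 0$ for $x<S_{\text{high}}$. This extends the monotonicity argument from the third fact to all of $[0,S_{\text{high}})$. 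So the choice $S_{\text{high}}$ strictly beats every $S_L<S_{\text{high}}$ under every continuation, and condition (ii) of the definition rules out any $S_L^*<S_{\text{high}}$. The claims about pushing past $S_{\text{unstable}}$ and needing no selection then follow, since $\mathcal{E}(S_{\text{high}})=\{S_{\text{high}}\}$ is single-valued.

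For part (b), on $S_L > S_{\text{high}}$ the continuation is $S^*=S_L$, so profit becomes $\Pi^{\text{mon}}(S_L) = P(S_L,S_L)S_L - C_L(S_L)$. Its interior optimum satisfies the first-order condition
\[
P(S_L,S_L) + S_L\Bigl[\tfrac{\partial P}{\partial Q^D} + \tfrac{\partial P}{\partial S}\Bigr] = G^{-1}(S_L),
\]
which is \eqref{eq:regime3_foc}. This function is continuous at $S_{\text{high}}$ and agrees there with the high-equilibrium profit. The leader's optimum is therefore either $S_{\text{high}}$ or the best monopoly point $S_L^{\text{mon}}$, whichever gives more profit. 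If no interior maximizer exists, the boundary $S_L=1$ is handled the same way.

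For part (c), evaluate the right derivative of $\Pi^{\text{mon}}$ at $S_{\text{high}}$. The terms $P - G^{-1}$ cancel because $S_{\text{high}}$ is a fixed point, leaving $S_{\text{high}}\bigl[\partial P/\partial Q^D + \partial P/\partial S\bigr]$. If this is positive, a small move past $S_{\text{high}}$ strictly raises profit, which gives sufficiency.

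Necessity is the main obstacle. A nonpositive derivative at $S_{\text{high}}$ rules out local gains but not a distant monopoly maximum. I would first try to use Assumption~\ref{ass:multiplicity}: for $S>S_{\text{high}}$ the response curve stays below the diagonal, so $P(S,S) < G^{-1}(S)$. I would combine this with the sign of the derivative to show that $\Pi^{\text{mon}}$ is decreasing beyond $S_{\text{high}}$. If that fails, I would read ``strictly profitable to build past'' as the local statement, or add a single-crossing or quasi-concavity condition on $P(S,S)S$.
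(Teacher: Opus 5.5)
Your proposal is correct, and for part (a) it takes a somewhat cleaner route than the paper. The paper treats each Stage~2 realization separately:
\begin{itemize}
\item It introduces $\Pi(S)=G^{-1}(S)\,S-\int_0^{G^{-1}(S)}c\,g(c)\,dc$ and shows $\Pi'(S)=S/g(G^{-1}(S))>0$. This handles the $S_{\text{low}}$ realization, and, through $\pi_L^{\text{corner}}(S_L)<\Pi(S_L)$, the corner realization.
\item It then runs a separate cost-bounding computation for the $S_{\text{high}}$ realization.
\end{itemize}
You instead bound every continuation price for $S_L<S_{\text{high}}$ by $p_h=G^{-1}(S_{\text{high}})$ and then use one monotonicity fact about $x\mapsto p_h x-C_L(x)$ on $[0,S_{\text{high}}]$. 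The paper's cost-bounding step for the high realization is exactly this fact in integrated form. You have therefore extended that one argument to all realizations and to $[S_{\text{unstable}},S_{\text{high}})$ at once, without differentiating $G^{-1}$. The paper's detour through $\Pi$ buys an interpretable statement, that capturing an entire higher fixed point is more profitable, which it uses in its exposition; logically, your bound suffices. Your part (b) and the sufficiency half of (c) match the paper. Your remark about the boundary $S_L=1$ slightly improves on the paper's reliance on the first-order condition.

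On the necessity half of (c), your suspicion is right, and the paper does not resolve it. Its proof computes the right derivative at $S_{\text{high}}$ and concludes only that profit is locally increasing at the Regime~II/III boundary if and only if the condition holds; that is your fallback reading.

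Your first plan cannot succeed in general. The monopoly-profit derivative is $[P(S,S)-G^{-1}(S)]+S\,\tfrac{d}{dS}P(S,S)$. Assumption~\ref{ass:multiplicity} controls only the first bracket, and the sign of $\tfrac{d}{dS}P(S,S)$ at $S_{\text{high}}$ says nothing about its sign further out.

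Here is a concrete counterexample. Take $G$ uniform on $[0,1]$ and $P(Q^D,S)=a(S)-bQ^D$ with $a'>0$ and $b$ large, so that $P(S,S)$ can be any smooth function with slope above $-b$. Choose it to:
\begin{itemize}
\item cross the diagonal from above at $0.1$, from below at $0.3$, and from above at $0.5$;
\item decrease just after $0.5$;
\item then rise to $P(0.9,0.9)=0.85$ while staying below the diagonal on $(0.5,1]$.
\end{itemize}
The condition fails at $S_{\text{high}}=0.5$. Yet monopoly profit at $0.9$ is $0.765-0.405=0.36$, which exceeds $\Pi(0.5)=0.125$.

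So state (c) explicitly as the local claim, as the paper effectively does. The global ``only if'' needs an extra quasi-concavity assumption of the kind you mention.
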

 
In Regime~II and in Regime~I under the $S_{\text{low}}$ realization, the 
large developer optimally builds the entire equilibrium supply, 
displacing all atomistic developers. The total profit when building 
the entire supply at equilibrium $S$, $\Pi(S)$, is strictly increasing 
because a higher equilibrium supply corresponds to a higher equilibrium 
price $G^{-1}(S)$, which increases revenue on all inframarginal units 
whose costs lie strictly below $G^{-1}(S)$. Thus, the large developer always prefers the high equilibrium, and committing past the unstable threshold to trigger $\Shigh$ is always profitable. A large developer with first-mover commitment power solves the coordination problem and guarantees that the market settles at minimum at the high equilibrium.

\begin{figure}[H]
    \centering
    \includegraphics[width=0.7\textwidth]{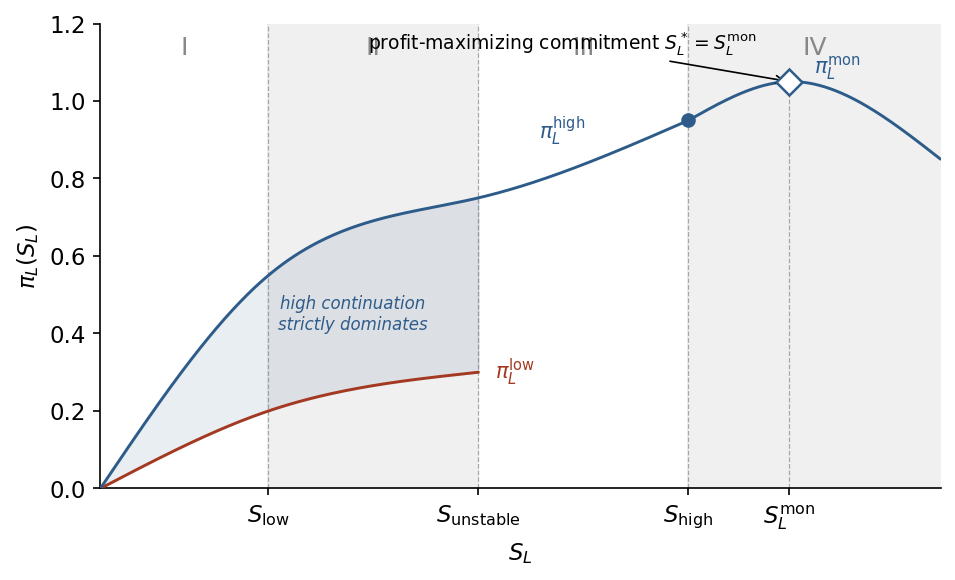}
    \caption{Plot of large developer profit against $S_L$ in high (blue) and low/corner (red) equilibrium continuation. High equilibrium continuation strictly dominates low equilibrium continuation for all $S_L < S_{\text{unstable}}$. Figure illustrates sufficiently strong and convex network effects, with $S^*_L = S_L^{\text{mon}}$. }
    \label{fig:profit_dominance}
\end{figure}

\begin{remark}[Application to Specified Demand]
Under $P(Q^D, S) = \frac{Q_{\max} + \gamma S^\alpha - Q^D}{\beta}$, the condition in part (c) becomes $\gamma \alpha (\Shigh)^{\alpha - 1} > 1$. The large developer's profit is locally increasing past $\Shigh$ (thus causing the large developer to build past $S_{\text{high}}$) when network effects are sufficiently strong (large $\gamma$), returns to density are sufficiently convex (high $\alpha$), or the high equilibrium supply level is sufficiently high. When these conditions fail, the demand curve slope dominates and the large developer does not exceed $\Shigh$.
\end{remark}
 
\subsection{Proof of Proposition~\ref{prop:main}}
\label{subsec:proof}
 
The proof proceeds by analyzing the developer's profit across four regions of $S_L$ determined by the equilibrium correspondence $\mathcal{E}(S_L)$.
 
\paragraph{Case 1: $S_L \leq \Slow$ (Regime I).}
 
When $S_L \leq \Slow$, both $\Slow$ and $\Shigh$ are feasible Stage 2 equilibria and the large developer cannot control which one materializes. Total supply is either $\Slow$ or $\Shigh$, and the large developer's profit is correspondingly either:
\begin{equation}
    \pi_L^{I, \text{low}}(S_L) = G^{-1}(\Slow) \cdot S_L - \int_0^{G^{-1}(S_L)} c \, g(c) \, dc
\end{equation}
or
\begin{equation}
    \pi_L^{I, \text{high}}(S_L) = G^{-1}(\Shigh) \cdot S_L - \int_0^{G^{-1}(S_L)} c \, g(c) \, dc
\end{equation}
 
In either case, the market-clearing price is constant in $S_L$. The large developer replaces atomistic developers one-for-one, leaving total supply unchanged at each fixed point. Both expressions are therefore strictly increasing in $S_L$ on $[0, \Slow)$, since each additional unit costs $G^{-1}(S_L)$, which is below either equilibrium price. The optimal choice in Regime I is therefore $S_L^I = \Slow$, with maximized profit:
\begin{equation}\label{eq:piI_star}
    \pi_L^{I*} = \begin{cases}
        G^{-1}(\Slow) \cdot \Slow - \displaystyle\int_0^{G^{-1}(\Slow)} c \, g(c) \, dc & \text{if market coordinates on } \Slow \\[10pt]
        G^{-1}(\Shigh) \cdot \Slow - \displaystyle\int_0^{G^{-1}(\Slow)} c \, g(c) \, dc & \text{if market coordinates on } \Shigh
    \end{cases}
\end{equation}
 
Even in the best case (market coordinates on $\Shigh$), the developer's profit is bounded above by $\pi_L^{I, \text{high}}(\Slow) = G^{-1}(\Shigh) \cdot \Slow - \int_0^{G^{-1}(\Slow)} c \, g(c) \, dc$, since Regime I constrains $S_L \leq \Slow$.
 
\paragraph{Case 2a: $\Slow < S_L < S_{\text{unstable}}$ (Intermediate Regime).}

Once $S_L$ exceeds $\Slow$, the low interior equilibrium disappears, but a corner equilibrium at $S^* = S_L$ with $S_{\text{small}} = 0$ persists alongside $\Shigh$. In this range, $G(P(S_L, S_L)) < S_L$, so the market price $P(S_L, S_L)$ is below $G^{-1}(S_L)$: no remaining atomistic developer can profitably enter, sustaining zero entry as a self-confirming equilibrium.

Under the corner realization, the developer's profit is:
\begin{equation}
    \pi_L^{\text{corner}}(S_L) = P(S_L, S_L) \cdot S_L - \int_0^{G^{-1}(S_L)} c \, g(c) \, dc
\end{equation}
Since $P(S_L, S_L) < G^{-1}(S_L)$ in this region:
\[
    \pi_L^{\text{corner}}(S_L) < G^{-1}(S_L) \cdot S_L - \int_0^{G^{-1}(S_L)} c \, g(c) \, dc = \Pi(S_L) < \Pi(\Shigh) = \pi_L^{II*}
\]
where the last inequality follows from $\Pi$ being strictly increasing (proved below in the cross-regime comparison). Under the $\Shigh$ realization, the developer's profit is $G^{-1}(\Shigh) \cdot S_L - \int_0^{G^{-1}(S_L)} c \, g(c) \, dc < \pi_L^{II*}$ since $S_L < \Shigh$. The developer therefore strictly prefers any Regime~II choice to any choice in the intermediate regime under either Stage~2 realization.

\paragraph{Case 2b: $S_{\text{unstable}} \leq S_L \leq \Shigh$ (Regime II).}

Once $S_L$ reaches $S_{\text{unstable}}$, the corner equilibrium is also eliminated and the unique Stage 2 equilibrium is $\Shigh$. The indeterminacy is fully resolved: the large developer's commitment guarantees the high-supply outcome. Total supply is $\Shigh$ and the price of a unit is $G^{-1}(\Shigh)$, constant in $S_L$. The same displacement logic applies; the large developer replaces atomistic developers at a fixed price in this regime, with profit given by:
\begin{equation}
    \pi_L^{II}(S_L) = G^{-1}(\Shigh) \cdot S_L - \int_0^{G^{-1}(S_L)} c \, g(c) \, dc
\end{equation}
 
Differentiating with respect to $S_L$:
\begin{equation}
    \frac{d\pi_L^{II}}{dS_L} = G^{-1}(\Shigh) - G^{-1}(S_L)
\end{equation}
 
Since $S_L \leq \Shigh$ implies $G^{-1}(S_L) \leq G^{-1}(\Shigh)$, profit is strictly increasing on $(S_{\text{unstable}}, \Shigh)$. Thus, the optimal choice of $S_L$ in Regime II is $S_L^{II} = \Shigh$, yielding maximized profit:
\begin{equation}\label{eq:piII_star}
    \pi_L^{II*} = G^{-1}(\Shigh) \cdot \Shigh - \int_0^{G^{-1}(\Shigh)} c \, g(c) \, dc
\end{equation}
 
\paragraph{Case 3: $S_L > \Shigh$ (Regime III).}
 
Since $\Shigh$ is the largest stable fixed-point solution to $S = G(P(S,S))$, we have $S > G(P(S,S))$ for any $S > \Shigh$ (because the curve $G(P(S,S))$ crosses the 45-degree line from above at $\Shigh$). Thus, when $S_L > \Shigh$, $G^{-1}(S_L) > P(S_L, S_L)$ so the cost of the next unit is higher than the price it will fetch on the market. No atomistic developers will enter, total supply is given by $S_L$, and the market becomes a true monopolist environment. The large developer therefore chooses $S_L$, subject to $S_L > \Shigh$, that maximizes:
\begin{equation}
    \pi_L^{III}(S_L) = P(S_L, S_L) \cdot S_L - \int_0^{G^{-1}(S_L)} c \, g(c) \, dc
\end{equation}
 
Differentiating:
\begin{equation}\label{eq:regime3_deriv}
    \frac{d\pi_L^{III}}{dS_L} = P(S_L, S_L) + \left[\frac{\dP}{\dQD}\bigg|_{(S_L, S_L)} + \frac{\dP}{\dS}\bigg|_{(S_L, S_L)}\right] \cdot S_L - G^{-1}(S_L)
\end{equation}
 
Unlike the preceding cases, here the large developer faces the standard monopolist tradeoff. Every extra unit it builds earns $P(S_L, S_L)$ but impacts the price by $\frac{\dP}{\dQD}\big|_{(S_L, S_L)} + \frac{\dP}{\dS}\big|_{(S_L, S_L)}$ per unit, and hence total revenue from the $S_L$ existing units by $\left[\frac{\dP}{\dQD}\big|_{(S_L, S_L)} + \frac{\dP}{\dS}\big|_{(S_L, S_L)}\right] \cdot S_L$. This can be negative when the demand curve slope dominates the network effect (and positive when vice versa). The large developer's optimal supply $S_L^{\text{mon}}$ satisfies the FOC:
\begin{equation}\label{eq:regime3_foc}
    P(S_L^{\text{mon}}, S_L^{\text{mon}}) + \left[\frac{\dP}{\dQD}\bigg|_{(S_L^{\text{mon}}, S_L^{\text{mon}})} + \frac{\dP}{\dS}\bigg|_{(S_L^{\text{mon}}, S_L^{\text{mon}})}\right] \cdot S_L^{\text{mon}} = G^{-1}(S_L^{\text{mon}})
\end{equation}
 
Maximized profit: $\pi_L^{III*} = \pi_L^{III}(S_L^{\text{mon}})$.
 
\paragraph{Cross-regime comparison: proof of part (a).}
 
We show that Regime II profit exceeds profit under every alternative $S_L < S_{\text{unstable}}$, under every possible Stage~2 realization.
 
\textit{Regime II vs.\ Regime I with $\Slow$ realization.}
Define:
\begin{equation}\label{eq:Pi_function}
    \Pi(S) \equiv G^{-1}(S) \cdot S - \int_0^{G^{-1}(S)} c \, g(c) \, dc
\end{equation}
as the profit the large developer earns when building all $S$ units of an equilibrium at price $G^{-1}(S)$. The Regime I profit under the $\Slow$ realization is $\pi_L^{I, \text{low}}(\Slow) = \Pi(\Slow)$, and the Regime II profit is $\pi_L^{II*} = \Pi(\Shigh)$.
 
Differentiating the revenue term:
\[
\frac{d}{dS}\left[G^{-1}(S) \cdot S\right] = \frac{S}{g(G^{-1}(S))} + G^{-1}(S)
\]
where $\frac{d}{dS}[G^{-1}(S)] = \frac{1}{g(G^{-1}(S))}$ by the inverse function theorem. For the cost integral, applying Leibniz's rule:
\[
\frac{d}{dS} \int_0^{G^{-1}(S)} c \, g(c) \, dc = G^{-1}(S) \cdot g(G^{-1}(S)) \cdot \frac{1}{g(G^{-1}(S))} = G^{-1}(S)
\]
Subtracting:
\begin{equation}\label{eq:Pi_prime}
    \Pi'(S) = \frac{S}{g(G^{-1}(S))} + G^{-1}(S) - G^{-1}(S) = \frac{S}{g(G^{-1}(S))}
\end{equation}
Since $S > 0$ and $g > 0$ on the interior of the support, $\Pi'(S) > 0$ for all $S > 0$. Therefore:
\[
\Shigh > \Slow \implies \Pi(\Shigh) > \Pi(\Slow) \implies \pi_L^{II*} > \pi_L^{I, \text{low}}(\Slow)
\]
 
\textit{Regime II vs.\ Regime I with $\Shigh$ realization.} If the market coordinates on $\Shigh$ in Regime I, the developer's best Regime I profit is:
\[
\pi_L^{I, \text{high}}(\Slow) = G^{-1}(\Shigh) \cdot \Slow - \int_0^{G^{-1}(\Slow)} c \, g(c) \, dc
\]
The Regime II profit is:
\[
\pi_L^{II*} = G^{-1}(\Shigh) \cdot \Shigh - \int_0^{G^{-1}(\Shigh)} c \, g(c) \, dc
\]
Taking the difference:
\[
\pi_L^{II*} - \pi_L^{I, \text{high}}(\Slow) = G^{-1}(\Shigh) \cdot (\Shigh - \Slow) - \int_{G^{-1}(\Slow)}^{G^{-1}(\Shigh)} c \, g(c) \, dc
\]
Every unit with cost $c \in [G^{-1}(\Slow), G^{-1}(\Shigh)]$ satisfies $c \leq G^{-1}(\Shigh)$, with strict inequality on a set of positive measure (since $g > 0$ on the interior of the support). Therefore:
\[
\int_{G^{-1}(\Slow)}^{G^{-1}(\Shigh)} c \, g(c) \, dc < G^{-1}(\Shigh) \cdot \int_{G^{-1}(\Slow)}^{G^{-1}(\Shigh)} g(c) \, dc = G^{-1}(\Shigh) \cdot (\Shigh - \Slow)
\]
where the final equality uses $\int_{G^{-1}(\Slow)}^{G^{-1}(\Shigh)} g(c) \, dc = G(G^{-1}(\Shigh)) - G(G^{-1}(\Slow)) = \Shigh - \Slow$. The difference is therefore strictly positive:
\[
\pi_L^{II*} > \pi_L^{I, \text{high}}(\Slow)
\]

\textit{Regime II vs.\ intermediate regime.} For any $S_L \in (\Slow, S_{\text{unstable}})$, the equilibrium correspondence contains $\{S_L, \Shigh\}$. Under the $\Shigh$ realization, the developer's profit is $G^{-1}(\Shigh) \cdot S_L - \int_0^{G^{-1}(S_L)} c\, g(c)\, dc < \pi_L^{II*}$ since $S_L < \Shigh$. Under the corner realization $S^* = S_L$, the developer's profit is $\pi_L^{\text{corner}}(S_L) = P(S_L, S_L) \cdot S_L - \int_0^{G^{-1}(S_L)} c\, g(c)\, dc$. Since $P(S_L, S_L) < G^{-1}(S_L)$ in this region, $\pi_L^{\text{corner}}(S_L) < \Pi(S_L) < \Pi(\Shigh) = \pi_L^{II*}$.
 
Since $\pi_L^{II*}$ exceeds profit under every alternative $S_L < S_{\text{unstable}}$ and every possible Stage 2 realization, the large developer strictly prefers Regime II or III. Combined with the optimum in Regime II being $S_L = \Shigh$ (Case 2b) and in Regime III being $S^{\text{mon}}_L > \Shigh$ (Case 3), we have $S^*_L \geq \Shigh$, establishing part (a).
 
\paragraph{Proof of part (c).}
 
Evaluate Regime III profit at $S_L = \Shigh$. At this point, $P(\Shigh, \Shigh) = G^{-1}(\Shigh)$ since $\Shigh$ is a fixed point. So:
\[
\frac{d\pi_L^{III}}{dS_L}\bigg|_{S_L = \Shigh} = P(\Shigh, \Shigh) + \left[\frac{\partial P}{\partial Q^D}\bigg|_{(\Shigh, \Shigh)} + \frac{\partial P}{\partial S}\bigg|_{(\Shigh, \Shigh)}\right] \cdot \Shigh - G^{-1}(\Shigh)
\]
\[
= \left[\frac{\partial P}{\partial Q^D}\bigg|_{(\Shigh, \Shigh)} + \frac{\partial P}{\partial S}\bigg|_{(\Shigh, \Shigh)}\right] \cdot \Shigh
\]
Since $S_{\text{high}} > 0$, this expression is positive if and only if $\frac{\partial P}{\partial Q^D}\big|_{(S_{\text{high}},S_{\text{high}})} + \frac{\partial P}{\partial S}\big|_{(S_{\text{high}},S_{\text{high}})} > 0$, establishing that the developer's profit is locally increasing at the Regime~II/III boundary under this condition.
 
\paragraph{Proof of part (b).}
 
The large developer compares maximized profit across regimes:
\[
S_L^* = \arg\max\{\pi_L^{I*}, \, \pi_L^{II*}, \, \pi_L^{III*}\}
\]
By part (a), $\pi_L^{II*}$ exceeds profit under every alternative with $S_L < S_{\text{unstable}}$, so $S_L^*$ is either $\Shigh$ (if $\pi_L^{II*} \geq \pi_L^{III*}$) or $S_L^{\text{mon}}$ (if $\pi_L^{III*} > \pi_L^{II*}$). In either case $S_L^* \geq \Shigh$, and the coordination failure is eliminated. \qed
 
\section{Welfare Analysis}
\label{sec:welfare}

The preceding sections characterized the equilibrium behavior of the large and atomistic developers. There is a live question about how efficient these outcomes are. This section derives the social planner's optimum, shows that any atomistic equilibrium underprovisions housing, ranks welfare across equilibria, and compares the Stackelberg outcome to the social optimum.

\subsection{General Welfare Expression}
\label{subsec:welfare_general}

Consider an arbitrary supply level, not necessarily equilibrium, $S$. Both the benevolent social planner and the market will allocate production to the lowest-cost developers. Thus, the marginal entrant has cost $c^* = G^{-1}(S)$. Total welfare at supply level $S$ is given by:
\begin{equation}\label{eq:welfare_general}
    W(S) = \int_0^S P(Q^D, S) \, dQ^D - \int_0^{G^{-1}(S)} c \, g(c) \, dc
\end{equation}

Prices do not appear in the welfare expression because total price paid, $P(S,S) \cdot S$, is subtracted from gross consumer benefit and added to producer surplus, dropping out from the expression. This expression holds for any supply level $S$, inverse demand function $P(Q^D, S)$, and cost distribution $G$.

\subsection{The Social Planner's Problem}
\label{subsec:planner}

A benevolent social planner chooses $S$ to maximize total welfare $W(S)$. Define gross consumer benefit as $B(S) = \int_0^S P(Q^D, S) \, dQ^D$. The supply $S$ appears in $B(S)$ in two places: as the upper limit of integration and as a parameter of the integrand through existing supply's effect on demand. Applying Leibniz's rule:
\begin{equation}\label{eq:dBdS}
    \frac{dB}{dS} = P(S, S) + \int_0^S \frac{\dP}{\dS}\bigg|_{(Q^D, S)} dQ^D
\end{equation}

The first term is the value of the marginal housing unit, the willingness to pay of the consumer for the next unit when total supply is $S$. The second term captures how the addition of one more housing unit affects the willingness to pay of all existing consumers. When $\frac{\dP}{\dS} > 0$, such as when there are positive network effects, this integral is positive.

On the cost side, we differentiate the integral using Leibniz's rule because the upper limit of integration, $c^*$, is a function of $S$ through $c^* = G^{-1}(S)$. Applying Leibniz's rule:
\begin{equation}
    \frac{d}{dS}\int_0^{G^{-1}(S)} c \, g(c) \, dc = G^{-1}(S) \cdot g(G^{-1}(S)) \cdot \frac{1}{g(G^{-1}(S))} = G^{-1}(S)
\end{equation}

This makes intuitive sense: the marginal cost to producers of the extra unit of housing is simply the cost of the next cheapest available developer, which is the developer who builds at the cost threshold $c^* = G^{-1}(S)$.

Putting these two expressions together and setting $\frac{dW}{dS} = 0$, we write the planner's FOC as:
\begin{equation}\label{eq:planner_foc}
    P(S, S) + \int_0^{S} \frac{\dP}{\dS}\bigg|_{(Q^D, S)} dQ^D = G^{-1}(S)
\end{equation}

The planner thus chooses a supply level $S$ such that marginal social benefit (the left side) equals marginal social cost (the right side).

\subsubsection*{Planner's FOC Under Specified Demand}

Under $P(Q^D, S) = \frac{Q_{\max} + \gamma S^\alpha - Q^D}{\beta}$, the network externality integral evaluates to:
\begin{equation}
    \int_0^S \frac{\dP}{\dS} \, dQ^D = \int_0^S \frac{\gamma \alpha S^{\alpha - 1}}{\beta} \, dQ^D = \frac{\gamma \alpha S^\alpha}{\beta}
\end{equation}

Adding to $P(S,S)$, total marginal social benefit is:
\begin{equation}
    \frac{dB}{dS} = \frac{Q_{\max} + \gamma S^\alpha - S}{\beta} + \frac{\gamma \alpha S^\alpha}{\beta} = \frac{Q_{\max} + \gamma(1 + \alpha)S^\alpha - S}{\beta}
\end{equation}

The planner's FOC then becomes:
\begin{equation}\label{eq:planner_foc_specified}
    \frac{Q_{\max} + \gamma(1 + \alpha)S^\alpha - S}{\beta} = G^{-1}(S)
\end{equation}

\subsection{Market Underprovision}
\label{subsec:underprovision}

\begin{lemma}[Equilibrium Does Not Achieve Social Optimum]\label{lem:underprovision}
Suppose that housing supply creates positive network externalities ($\frac{\dP}{\dS} > 0$). Then at any market equilibrium $S^* \in (0, 1)$, welfare is locally increasing: $\frac{dW}{dS}\bigg|_{S = S^*} > 0$. Under the additional regularity condition that $W$ admits a unique interior maximum, this implies $\SFB > S^*$.
\end{lemma}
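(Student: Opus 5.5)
The plan is to evaluate the derivative of the general welfare expression \eqref{eq:welfare_general} at a market equilibrium and then sign the result. The derivative was already computed in Section~\ref{subsec:planner}. Combining \eqref{eq:dBdS} with the Leibniz computation for the cost integral gives, for any $S \in (0,1)$,
\[
\frac{dW}{dS} = P(S,S) + \int_0^{S} \frac{\dP}{\dS}\bigg|_{(Q^D,S)} dQ^D - G^{-1}(S).
\]
I will record this identity first. It uses only Assumption~\ref{ass:regularity}, which guarantees $g>0$ on the interior so that $G^{-1}$ is differentiable, together with enough smoothness of $P$ to differentiate under the integral sign. I will state that smoothness explicitly as a maintained condition.

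Next I will use the equilibrium condition. At an interior market equilibrium $S^* \in (0,1)$ the marginal atomistic entrant is indifferent, so \eqref{eq:fixed_point} gives $S^* = G(P(S^*,S^*))$, equivalently $P(S^*,S^*) = G^{-1}(S^*)$. Substituting this into the identity cancels the private marginal value against the marginal cost, which leaves
\[
\frac{dW}{dS}\bigg|_{S=S^*} = \int_0^{S^*} \frac{\dP}{\dS}\bigg|_{(Q^D,S^*)} dQ^D .
\]
By Assumption~\ref{as:demand} the integrand is strictly positive, and the interval $[0,S^*]$ has positive length because $S^*>0$. The integral is therefore strictly positive, which is the local claim. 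The economic content is that the private entry condition ignores the inframarginal externality term.

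The hard step is making sure that ``any market equilibrium'' really satisfies $P(S^*,S^*) = G^{-1}(S^*)$. The stated lemma concerns atomistic equilibria, that is, fixed points of $S=G(P(S,S))$, and for these the cancellation is exact. If one also wanted to cover Stackelberg outcomes in the monopoly region, where $P(S_L,S_L) < G^{-1}(S_L)$, the cancellation would fail. I would therefore restrict explicitly to fixed points: $\Slow$, $\Shigh$, and also $\Sunstable$, for which the argument works verbatim. I would also handle boundary points of $[0,\bar c]$ by noting that $S^*\in(0,1)$ keeps $G^{-1}(S^*)$ in the interior, where $G$ is invertible with positive density.

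For the global statement, suppose $W$ has a unique interior maximizer $\SFB$. If $\SFB \le S^*$ held, then $\SFB = S^*$ is impossible, since $W'(S^*)>0$ rules out $S^*$ being a maximum. So $\SFB < S^*$. Because $W'(S^*)>0$, there is a point slightly above $S^*$ with $W > W(S^*)$. I would then argue that uniqueness of the interior maximum means $W$ is single-peaked, and this is the interpretation of the regularity condition I would adopt. Single-peakedness implies that $W$ is nonincreasing to the right of $\SFB$, which contradicts $W$ increasing at $S^*$. Hence $\SFB > S^*$.
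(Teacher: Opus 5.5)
Your proposal is correct and matches the paper's proof: you evaluate the planner's marginal-welfare expression at $S^*$, cancel $P(S^*,S^*)$ against $G^{-1}(S^*)$ using the fixed-point condition, and sign the remaining inframarginal externality integral. Restricting to interior fixed points is what the cancellation actually needs, and the global step does require single-peakedness, which cannot be derived from a unique maximizer alone (a unique global peak to the left of $S^*$ is compatible with $W$ rising at $S^*$ toward a lower local peak), so it has to be imposed as the hypothesis, as the paper later does when it cites ``the single-peakedness of $W$ maintained in Lemma~\ref{lem:underprovision}.''
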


\begin{figure}[h!]
    \centering
    \includegraphics[width=0.75\textwidth]{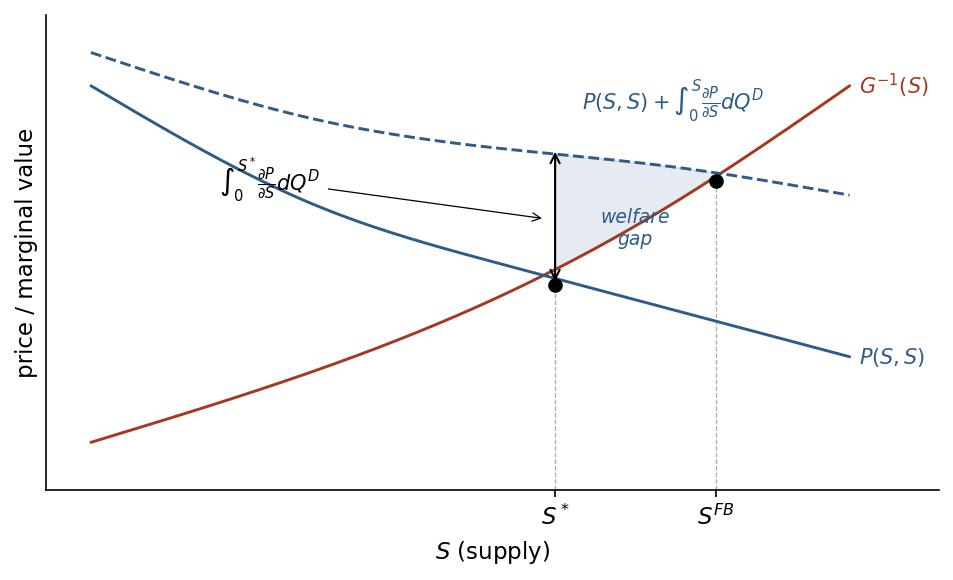}
    \caption{Market underprovision due to network externalities. The market equilibrium $S^*$ equates private marginal benefit $P(S,S)$ with marginal cost $G^{-1}(S)$, while the social planner accounts for the additional benefit $\int_0^S \frac{\partial P}{\partial S} dQ^D$. This creates a wedge between private and social marginal benefit, leading to underprovision relative to the first-best $S^{FB}$.}
    \label{fig:welfare_wedge}
    \vspace{-0.5em}
\end{figure}

\begin{proof}
Evaluate the expression for welfare under a social planner at the market equilibrium $S^*$:
\begin{equation}
    \frac{dW}{dS}\bigg|_{S = S^*} = P(S^*, S^*) + \int_0^{S^*} \frac{\dP}{\dS}\bigg|_{(Q^D, S^*)} dQ^D - G^{-1}(S^*)
\end{equation}

The market equilibrium condition requires that the marginal developer's cost to enter equals their marginal benefit: $P(S^*, S^*) = G^{-1}(S^*)$. Substituting:
\begin{equation}
    \frac{dW}{dS}\bigg|_{S = S^*} = G^{-1}(S^*) + \int_0^{S^*} \frac{\dP}{\dS}\bigg|_{(Q^D, S^*)} dQ^D - G^{-1}(S^*) = \int_0^{S^*} \frac{\dP}{\dS}\bigg|_{(Q^D, S^*)} dQ^D > 0
\end{equation}

This inequality holds because $\frac{\dP}{\dS} > 0$ by assumption. Since $\frac{dW}{dS} > 0$ at equilibrium $S^*$, welfare is strictly increasing in supply at the market equilibrium supply level. Therefore $\SFB > S^*$.
\end{proof}

The implication $\frac{dW}{dS} > 0 \implies S^{FB} > S^*$ requires that $W$ is maximized in $(S^*, 1)$ (the upper bound of supply established in Section~\ref{subsec:primitives}), which holds whenever $W$ is eventually decreasing for large $S$. Under the specified demand function, $\frac{dW}{dS} \to -\infty$ as $S \to 1$ since marginal cost $G^{-1}(S) \to \bar{c}$ while marginal benefit remains bounded, so $W$ is single-peaked and the global maximizer $S^{FB}$ exceeds $S^*$. Under general demand, we maintain as a regularity condition that $W$ admits a unique interior maximum.

\begin{remark}[Welfare Gap Under Specified Demand]\label{rem:welfare_gap}
Under $P(Q^D, S) = \frac{Q_{\max} + \gamma S^\alpha - Q^D}{\beta}$:
\begin{equation}\label{eq:welfare_gap}
    \frac{dW}{dS}\bigg|_{S = S^*} = \frac{\gamma \alpha (S^*)^\alpha}{\beta}
\end{equation}

This is the total uninternalized network externality at the equilibrium supply level $S^*$. Each of the $S^*$ existing consumers experiences $\frac{\gamma \alpha (S^*)^{\alpha - 1}}{\beta}$ increase in welfare with the construction of the next unit of housing. However, none of these existing consumers will pay for the next unit; the marginal developer can only charge one marginal consumer $P(S^*, S^*)$ and has no mechanism to charge for the benefit their unit affords to existing consumers. Thus, the market stops building when private marginal benefit equals private marginal cost, leaving $\frac{\gamma \alpha (S^*)^\alpha}{\beta}$ of welfare on the table. 

This welfare gap is a \citet{spence_1975} distortion. Atomistic entry
decisions internalize only the marginal consumer's willingness to pay
for the next unit, not the inframarginal consumers' network benefits,
and no atomistic entrant has the scale to internalize the latter. The
same logic gives the monopoly platform in \citet{weyl_2010} its residual
welfare loss. In the present setting, the Stackelberg developer partially
closes this gap when it expands into the monopoly region under the
condition of Proposition~\ref{prop:main}(c), since it internalizes
some of its own network externality across its inframarginal units. The
gap closes fully only for a social planner, who values the externality
at all $S$ consumers' willingness to pay.

\end{remark}

\begin{remark}[Comparative Statics of Welfare Gap]\label{rem:welfare_comp_statics}
The welfare that the market equilibrium fails to provide is increasing in $\gamma$ and decreasing in $\beta$. Higher $\gamma$ corresponds to stronger network effects, meaning existing consumers benefit more but the developer has no mechanism to charge for this added welfare. The welfare gap is decreasing in $\beta$ since higher $\beta$ means consumers' demand is more price-elastic; consumers value the added unit of welfare less.
\end{remark}

\subsection{Welfare Ranking Across Equilibria}
\label{subsec:welfare_ranking}

\begin{proposition}[Stackelberg Welfare Improvement] \label{prop:welfare_ranking}
The high equilibrium strictly dominates: $W(\Shigh) > W(\Slow)$. The Stackelberg outcome, in which the large developer eliminates the low equilibrium, strictly improves welfare relative to the low equilibrium outcome.
\end{proposition}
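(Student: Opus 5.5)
The plan is to avoid integrating $dW/dS$ from $\Slow$ to $\Shigh$. On $(\Slow,\Sunstable)$ the private margin $P(S,S)-G^{-1}(S)$ is negative, so the sign of that integrand is not controlled there. Instead I will compare $W(\Shigh)$ and $W(\Slow)$ directly using the reduced form \eqref{eq:welfare}. This form applies to both because each is an equilibrium allocation that builds the lowest-cost mass of sites. The fixed-point property $P(S,S)=G^{-1}(S)$ at both $\Slow$ and $\Shigh$, together with Assumption~\ref{as:demand}, should suffice.

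First, I split gross benefit at $\Shigh$ at the point $\Slow$:
\[
\int_0^{\Shigh} P(Q^D,\Shigh)\,dQ^D=\int_0^{\Slow} P(Q^D,\Shigh)\,dQ^D+\int_{\Slow}^{\Shigh} P(Q^D,\Shigh)\,dQ^D .
\]
For the first piece I use the network effect $\dP/\dS>0$ pointwise in $Q^D$, which gives $P(Q^D,\Shigh)>P(Q^D,\Slow)$. Hence the first piece is at least $\int_0^{\Slow}P(Q^D,\Slow)\,dQ^D$, which is exactly gross benefit at $\Slow$, with strict inequality when $\Slow>0$.

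For the second piece I use the downward slope $\dP/\dQD<0$. For $Q^D\le\Shigh$ this gives $P(Q^D,\Shigh)\ge P(\Shigh,\Shigh)=G^{-1}(\Shigh)$, where the equality is the equilibrium condition \eqref{eq:fixed_point}. So the second piece is at least $G^{-1}(\Shigh)(\Shigh-\Slow)$.

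On the cost side, $W(\Shigh)$ has the additional cost $\int_{G^{-1}(\Slow)}^{G^{-1}(\Shigh)}c\,g(c)\,dc$ relative to $W(\Slow)$. I bound this above by $G^{-1}(\Shigh)(\Shigh-\Slow)$, using the same argument as in the cross-regime comparison in the proof of Proposition~\ref{prop:main}. Since $g>0$ on the interior of the support by Assumption~\ref{ass:regularity}, this inequality is strict.

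Combining the three bounds gives
\[
W(\Shigh)-W(\Slow)\;\ge\;\Big[\textstyle\int_0^{\Slow}\big(P(Q^D,\Shigh)-P(Q^D,\Slow)\big)dQ^D\Big]+\Big[G^{-1}(\Shigh)(\Shigh-\Slow)-\textstyle\int_{G^{-1}(\Slow)}^{G^{-1}(\Shigh)}c\,g(c)\,dc\Big].
\]
The first bracket is nonnegative, and the second is strictly positive. Strictness therefore holds even in the degenerate case $\Slow=0$. This establishes $W(\Shigh)>W(\Slow)$.

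For the second sentence of the proposition, I invoke Proposition~\ref{prop:main}(a). The Stackelberg outcome has total supply $\Shigh$ if $S_L^*=\Shigh$. If instead the leader chooses $S_L^{\text{mon}}>\Shigh$, this is optimal only when the condition of Proposition~\ref{prop:main}(c) holds. In that case I would argue that welfare keeps rising on $[\Shigh,S_L^{\text{mon}}]$. On that interval the leader's own margin is nonnegative, the externality integral is positive, and the planner's marginal benefit exceeds the leader's.

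The main obstacle is this last case. Comparing $W(S_L^{\text{mon}})$ with $W(\Shigh)$ in general needs care, because $P(S,S)<G^{-1}(S)$ beyond $\Shigh$. I would first try to show $dW/dS\ge d\pi_L^{III}/dS_L$ on that interval, using the fact that $\int_0^S\dP/\dS\,dQ^D$ dominates $(\dP/\dQD+\dP/\dS)S$ when $\dP/\dQD<0$ and $\dP/\dS$ does not vary with $Q^D$, as under the specified demand. If that comparison fails in general, I would restrict the claim to the comparison against the low equilibrium only. That comparison is what the displayed strict inequality asserts.
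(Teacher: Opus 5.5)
Your argument for $W(\Shigh)>W(\Slow)$ is correct and is essentially the paper's proof. It uses the same split into $\int_{\Slow}^{\Shigh}P(Q^D,\Shigh)\,dQ^D$, the network-effect term on $[0,\Slow]$, and the extra cost $\int_{G^{-1}(\Slow)}^{G^{-1}(\Shigh)}c\,g(c)\,dc$, bounded the same way via the downward slope, the fixed-point condition at $\Shigh$, and $g>0$; your note that strictness comes from the last two terms alone, so $\Slow=0$ is covered, slightly improves on the paper's claim that the network-effect term is strictly positive.

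For the monopoly-region case, which the paper treats only in a later remark via single-peakedness of $W$, you need neither part (c) nor a nonnegative leader margin on all of $[\Shigh,S_L^{\text{mon}}]$, and the latter can fail when $\pi_L^{III}$ is not quasi-concave. It suffices to integrate your inequality $dW/dS\ge d\pi_L^{III}/dS_L$ (valid, e.g., under the specified demand) and use revealed preference, $\pi_L^{III}(S_L^{\text{mon}})>\pi_L^{III}(\Shigh)=\Pi(\Shigh)$, which gives $W(S_L^{\text{mon}})>W(\Shigh)$ directly.
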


\begin{proof}
Decompose the welfare difference as follows:
\begin{align}
W(\Shigh) - W(\Slow) &= \left[\int_0^{\Shigh} P(Q^D, \Shigh) \, dQ^D - \int_0^{\Slow} P(Q^D, \Slow) \, dQ^D\right] \nonumber \\
&\quad - \left[\int_0^{G^{-1}(\Shigh)} c \, g(c) \, dc - \int_0^{G^{-1}(\Slow)} c \, g(c) \, dc\right] \nonumber \\
&= \underbrace{\int_{\Slow}^{\Shigh} P(Q^D, \Shigh) \, dQ^D}_{\text{(A)}} + \underbrace{\int_0^{\Slow} \left[P(Q^D, \Shigh) - P(Q^D, \Slow)\right] dQ^D}_{\text{(B)}} \nonumber \\
&\quad - \underbrace{\int_{G^{-1}(\Slow)}^{G^{-1}(\Shigh)} c \, g(c) \, dc}_{\text{(C)}}
\end{align}

We show that $\text{(A)} > \text{(C)} > 0$ and $\text{(B)} > 0$.

\textbf{Term (A) vs.\ Term (C).} Since $\frac{\dP}{\dQD} < 0$, the function $P(Q^D, \Shigh)$ is decreasing in $Q^D$. For all $Q^D \in [\Slow, \Shigh]$:
\[
P(Q^D, \Shigh) \geq P(\Shigh, \Shigh) = G^{-1}(\Shigh)
\]
where the equality uses the fact that $\Shigh$ is a market equilibrium. Therefore:
\[
\text{(A)} = \int_{\Slow}^{\Shigh} P(Q^D, \Shigh) \, dQ^D \geq G^{-1}(\Shigh) \cdot (\Shigh - \Slow)
\]
For term (C), every cost $c \in [G^{-1}(\Slow), G^{-1}(\Shigh)]$ satisfies $c \leq G^{-1}(\Shigh)$, with strict inequality since $g > 0$ on the support. Therefore:
\[
\text{(C)} = \int_{G^{-1}(\Slow)}^{G^{-1}(\Shigh)} c \, g(c) \, dc < G^{-1}(\Shigh) \cdot (\Shigh - \Slow)
\]
where we used $\int_{G^{-1}(\Slow)}^{G^{-1}(\Shigh)} g(c) \, dc = \Shigh - \Slow$. Combining: $\text{(A)} > \text{(C)}$.

\textbf{Term (B).} Since $\frac{\dP}{\dS} > 0$ and $\Shigh > \Slow$, we have $P(Q^D, \Shigh) > P(Q^D, \Slow)$ for all $Q^D \in [0, \Slow]$. The integrand is strictly positive, so $\text{(B)} > 0$.

Combining: $W(\Shigh) - W(\Slow) = (\text{A} - \text{C}) + \text{B} > 0$.
\end{proof}

\begin{remark}[Welfare Decomposition]\label{rem:welfare_decomposition}
Welfare at the Stackelberg outcome $S_L^* = \Shigh$ is identical to welfare at the atomistic high equilibrium. The same sites are built at the same costs, and consumers face the same price. The welfare gain from the Stackelberg game relative to the atomistic market is therefore entirely an equilibrium selection effect: the developer guarantees $\Shigh$ rather than risking  coordination failure at $\Slow$. The welfare improvement is exactly  $W(\Shigh) - W(\Slow)$ relative to the low equilibrium and 0 relative to the high equilibrium, so the Stackelberg outcome weakly welfare-dominates every atomistic outcome and strictly dominates whenever $\Slow$ would otherwise have materialized.
\end{remark}

\subsection{Stackelberg vs.\ Social Optimum}
\label{subsec:stackelberg_welfare}

At the Stackelberg equilibrium, total supply is $S^{\text{stack}} = \Shigh$ (assuming the developer does not enter Regime III). Since $\Shigh$ is itself a market equilibrium (it satisfies $P(\Shigh, \Shigh) = G^{-1}(\Shigh)$), Lemma~\ref{lem:underprovision} applies directly:
\begin{equation}
    \frac{dW}{dS}\bigg|_{S = \Shigh} = \int_0^{\Shigh} \frac{\dP}{\dS}\bigg|_{(Q^D, \Shigh)} dQ^D > 0
\end{equation}

Therefore welfare is locally increasing at $\Shigh$, and under the maintained regularity condition of Lemma~\ref{lem:underprovision}, $\SFB > \Shigh$: even the Stackelberg outcome underprovisions housing relative to the social optimum. The Stackelberg leader partially closes the welfare gap (by eliminating the low equilibrium) but does not fully internalize the network externality.

Under specified demand, the remaining welfare gap at the Stackelberg equilibrium is $\frac{\gamma \alpha (\Shigh)^\alpha}{\beta}$.

\begin{remark}[Welfare in the Monopoly Region]\label{rem:monopoly_welfare}
When the large developer enters Regime~III, choosing $S_L^* = S_L^{\text{mon}} > \Shigh$, welfare at the Stackelberg outcome exceeds welfare at the high equilibrium. Since $\frac{dW}{dS} > 0$ at $\Shigh$ (Lemma~\ref{lem:underprovision}), welfare is locally increasing at $\Shigh$. It remains to verify that $S_L^{\text{mon}} < S^{FB}$, so that the developer does not build past the optimum.

Comparing the monopolist's FOC~\eqref{eq:regime3_foc} with the planner's FOC~\eqref{eq:planner_foc}, the monopolist's marginal benefit at any $S$ includes the revenue loss $S \cdot \frac{\partial P}{\partial Q^D} < 0$ on inframarginal units, which the planner omits as a transfer from consumers to the developer. Under the specified demand function,
\[
\frac{\partial P}{\partial S} = \frac{\gamma \alpha S^{\alpha-1}}{\beta}
\]
is constant in $Q^D$, so the monopolist fully internalizes the network externality on its own units:
\[
S \cdot \frac{\partial P}{\partial S}\bigg|_{(S,S)} = \int_0^S \frac{\partial P}{\partial S}\bigg|_{(Q^D, S)}\, dQ^D = \frac{\gamma \alpha S^\alpha}{\beta}.
\]
The wedge between the two FOCs is therefore the standard downward-sloping-demand effect alone. Under general demand, where $\frac{\partial P}{\partial S}$ may vary in $Q^D$, an additional gap arises if the monopolist's network-effect term $S \cdot \frac{\partial P}{\partial S}\big|_{(S,S)}$ falls short of the planner's $\int_0^S \frac{\partial P}{\partial S}\, dQ^D$.

In either case, the monopolist's marginal benefit is strictly below the planner's at any $S$, so $S_L^{\text{mon}} < S^{FB}$. Combined with $S_L^{\text{mon}} > \Shigh$ and the single-peakedness of $W$ maintained in Lemma~\ref{lem:underprovision}, this places $S_L^{\text{mon}}$ on the increasing portion of $W$, so $W(S_L^{\text{mon}}) > W(\Shigh)$. The developer's expansion into the monopoly region partially closes the remaining welfare gap between $\Shigh$ and the social optimum.

\end{remark}

\begin{figure}[h!]
\centering
\begin{subfigure}{0.48\textwidth}
    \centering
    \includegraphics[width=\linewidth]{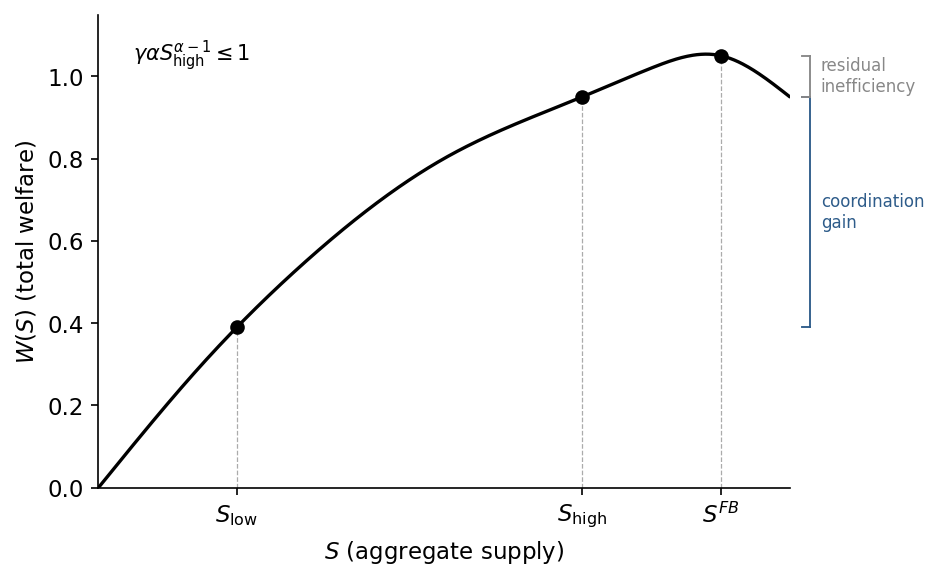}
    \caption{$\gamma \alpha S_{high}^{\alpha - 1} \leq 1$}
\end{subfigure}
\hfill
\begin{subfigure}{0.48\textwidth}
    \centering
    \includegraphics[width=\linewidth]{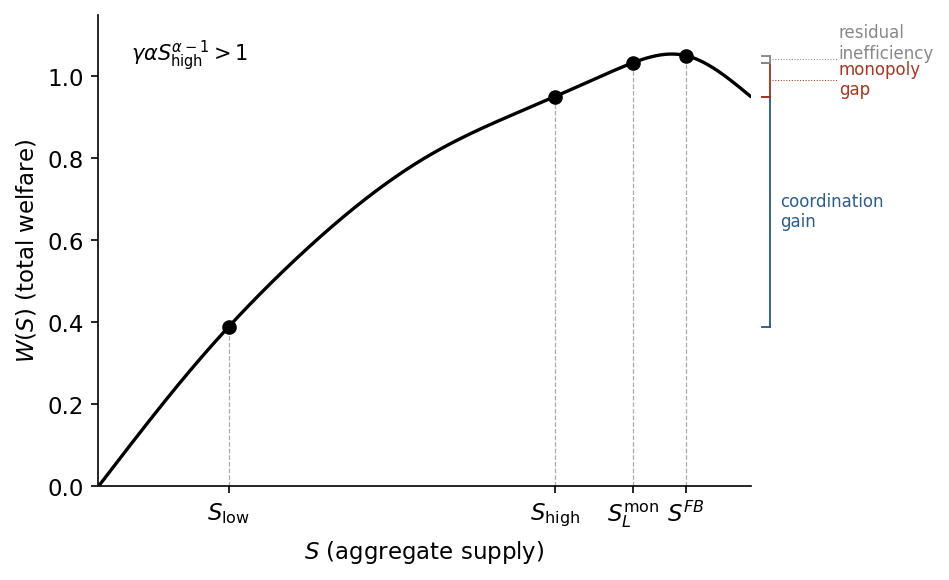}
    \caption{$\gamma \alpha S_{high}^{\alpha - 1} > 1$}
\end{subfigure}

\caption{Welfare effects of Stackelberg leadership under weak and strong network effects. In both cases, eliminating the low equilibrium generates a coordination gain by increasing supply from $S_{low}$ to $S_{high}$. When network effects are weak (left panel), the large developer implements $S_{high}$, leaving a residual inefficiency relative to the social optimum $S^{FB}$. When network effects are strong (right panel), the developer expands into the monopoly region, partially closing the remaining welfare gap.}
\label{fig:welfare_ladder}
\end{figure}

\FloatBarrier
\section{Discussion}
\label{sec:discussion}

\subsection{Robustness to Capacity Constraints}
\label{subsec:displacement}

Within each regime, the large developer optimally claims the entire equilibrium supply, displacing all atomistic developers. This section shows that imposing a capacity constraint $K$ on the number of units the large developer is allowed to claim does not substantially alter the model's results.

Suppose the large developer is constrained to build at most $K$ units, 
where $\Slow < K < \Shigh$. The developer's problem becomes:
\[
\max_{S_L \in [0, K]} \; P(S^*, S^*) \cdot S_L 
- \int_0^{G^{-1}(S_L)} c \, g(c) \, dc
\]
where $S^* \in \mathcal{E}(S_L)$ as before. Since $K > \Slow$, the 
developer can choose $S_L = K > \Slow$, which places the game in 
Regime II. The unique Stage 2 equilibrium is $\Shigh$. Atomistic 
developers build the remaining $\Shigh - K > 0$ units, and total 
supply reaches $\Shigh$. The developer's profit under this choice is:
\[
\pi_L^{K} = G^{-1}(\Shigh) \cdot K 
- \int_0^{G^{-1}(K)} c \, g(c) \, dc
\]

We verify that the developer prefers this to any Regime I outcome. 
Under the $\Slow$ realization, the best Regime I profit is 
$\Pi(\Slow)$. Since $K > \Slow$:
\[
\pi_L^{K} - \Pi(\Slow) = G^{-1}(\Shigh) \cdot K 
- \int_0^{G^{-1}(K)} c \, g(c) \, dc 
- G^{-1}(\Slow) \cdot \Slow 
+ \int_0^{G^{-1}(\Slow)} c \, g(c) \, dc
\]
Since $G^{-1}(\Shigh) > G^{-1}(\Slow)$ and $K > \Slow$, the revenue 
term $G^{-1}(\Shigh) \cdot K$ strictly exceeds 
$G^{-1}(\Slow) \cdot \Slow$. The additional cost 
$\int_{G^{-1}(\Slow)}^{G^{-1}(K)} c \, g(c) \, dc$ is bounded above 
by $G^{-1}(K) \cdot (K - \Slow) < G^{-1}(\Shigh) \cdot (K - \Slow)$, 
since $K < \Shigh$ implies $G^{-1}(K) < G^{-1}(\Shigh)$. Therefore:
\[
\pi_L^{K} > G^{-1}(\Shigh) \cdot \Slow 
- \int_0^{G^{-1}(\Slow)} c \, g(c) \, dc \geq \Pi(\Slow)
\]

Under the $\Shigh$ equilibrium, the best Regime I profit is 
$G^{-1}(\Shigh) \cdot \Slow - \int_0^{G^{-1}(\Slow)} c \, g(c) \, dc$. 
The difference is:
\[
\pi_L^{K} - \pi_L^{I, \text{high}}(\Slow) 
= G^{-1}(\Shigh) \cdot (K - \Slow) 
- \int_{G^{-1}(\Slow)}^{G^{-1}(K)} c \, g(c) \, dc > 0
\]
by the same cost-bounding argument as in Proposition~\ref{prop:main}(a): every 
unit with cost $c \in [G^{-1}(\Slow), G^{-1}(K)]$ satisfies 
$c < G^{-1}(\Shigh)$.

The capacity constraint therefore does not alter the main result. 
For any $K > \Slow$, the developer strictly prefers building $K$ 
units and triggering the high equilibrium. The coordination mechanism 
survives and the developer commits enough to eliminate the low 
equilibrium, and atomistic developers fill in the remaining 
$\Shigh - K$ units. The large developer and atomistic developers 
coexist in equilibrium, which is the empirically relevant case.

\subsection{Alternative Cost Structure: Constant Marginal Cost}
\label{subsec:alt_cost}

The baseline model treats construction costs as the natural consequence of a heterogeneous, observable stock of buildable sites. When the Stackelberg leader optimizes over this stock, they select the lowest-cost sites available. An alternative environment is one in which the leader produces using a separate technology with constant marginal cost $c_L > 0$, while the atomistic fringe continues to draw from the common cost distribution $G$. This subsection characterizes the resulting equilibrium and shows that the coordination result becomes parameter-dependent.

Under this alternative, given a leader commitment $S_L$, atomistic developers enter whenever $c_j \le P(S^*, S^*)$, so the mass of atomistic entrants at price $P(S^*, S^*)$ is $G(P(S^*, S^*))$. Total supply satisfies:

\begin{equation}\label{eq:alt_fixed_point}
    S^* = S_L + G\!\left(P(S^*, S^*)\right).
\end{equation}

Equation \eqref{eq:alt_fixed_point} differs from the baseline fixed-point equation because the leader's supply shifts the fringe response curve directly rather than displacing fringe units one-for-one from a common pool.\footnote{Total supply $S^*$ may exceed 1 under this alternative technology. The normalization in Section~\ref{subsec:primitives} bounded total supply by the mass of sites in the common stock, but here the leader produces outside that stock.} The leader's profit is
\begin{equation}\label{eq:alt_profit}
    \pi_L(S_L, S^*) = \left[P(S^*, S^*) - c_L\right]S_L,
\end{equation}
where $S^*$ is determined by \eqref{eq:alt_fixed_point}. Because $P(S^*, S^*)$ is no longer constant in $S_L$, the leader's objective cannot be analyzed by comparing profits at the fixed baseline prices $G^{-1}(\Slow)$ and $G^{-1}(\Shigh)$. Whether the leader finds it profitable to commit enough supply to move the market into a high-supply region depends jointly on $c_L$, the shape of demand, and fringe response through \eqref{eq:alt_fixed_point}.

The welfare comparison also changes. Under the baseline model, total production cost at supply $S$ is $\int_0^{G^{-1}(S)} c\, g(c)\, dc$ regardless of who builds which sites, because the leader optimally selects the lowest-cost sites and the fringe builds the rest in order. Under constant marginal cost, leader units cost $c_L$ each while fringe developers build according to their own cost draws, giving
\begin{equation}\label{eq:alt_welfare}
    W^{\mathrm{alt}}(S_L, S) = \int_0^S P(Q^D, S)\, dQ^D - c_L \cdot S_L - \int_0^{G^{-1}(S - S_L)} c\, g(c)\, dc.
\end{equation}
Production costs are lower than under the purely atomistic allocation at the same total supply if and only if $c_L \cdot S_L < \int_{G^{-1}(S - S_L)}^{G^{-1}(S)} c\, g(c)\, dc$. The leader must produce more cheaply than the atomistic developers they replace.

Any welfare gain from Stackelberg leadership under this alternative therefore has two distinct sources: a \textit{coordination effect}, from shifting the market toward a higher-supply outcome, and a \textit{production-efficiency effect}, when the leader's constant marginal cost is sufficiently low. The two channels reinforce one another when the leader is efficient and work against one another when the leader is inefficient, so the unconditional result of the baseline does not carry over.

\subsection{Limitations}
\label{subsec:limitations}

The main results rely on three modeling assumptions made for tractability. In each case, the assumption simplifies equilibrium characterization but is not essential to the core intuition of the coordination device. This section discusses what each assumption establishes, what would change if it were relaxed, and why the core result is likely to survive.

\textbf{Simultaneous Stage 2.} We assumed in the model that all atomistic developers move simultaneously in Stage 2. In reality, development occurs sequentially and continuously. After the large developer commits to building, small developers follow over time, and each new entrant's visible commitment raises total supply, shifting the game for later movers.

This assumption simplifies the equilibrium characterization. Simultaneous entry yields the clean fixed-point condition $S^* = G(P(S^*, S^*))$, whereas sequential entry would require modeling each developer's inference about how many predecessors have entered and how many successors will follow. That said, the simultaneous move assumption likely makes the coordination failure harder to resolve than it would be in practice. If atomistic developers entered sequentially, early entrants' visible commitments would signal profitability to later developers, potentially triggering an informational cascade that partially unravels the low equilibrium even without a large developer. The model therefore gives the large developer the most difficult version of the coordination problem, and shows they still solve it. A dynamic multi-period extension is a natural next step. The key open question is whether sequential entry substitutes for the large developer's role or merely complements it. 

\textbf{Common Knowledge of Costs.} The baseline model assumes complete information, that all agents know the realized cost structure and payoff environment. A natural relaxation is to suppose instead that each atomistic developer observes only their own cost draw $c_j$, while the distribution $G$ remains common knowledge. We maintain the baseline assumption that the large developer observes the cost of each potential project. This asymmetry is natural in housing markets. Large developers conduct feasibility studies, commission site-level cost estimates for any potential development, and maintain internal cost databases across their portfolios, all fixed investments in information acquisition that only scale can justify. Atomistic developers, building a single project, rationally know only their own cost. These information-acquisition investments are part of the same scale advantage that justifies the leader's Stage 1 position (Section~\ref{subsec:primitives}).

This private information does not by itself generate aggregate uncertainty. Each atomistic developer, having observed that the Stackelberg leader  selected the $S_L$ lowest-cost sites, knows their own cost exceeds $G^{-1}(S_L)$ and adopts a symmetric cutoff strategy. They build if $c_j \leq P(S^*,S^*)$ and do not enter if not. By the exact law of large numbers, the realized mass of fringe entrants is deterministic and equal to the measure of developers whose costs lie in $(G^{-1}(S_L), P(S^*,S^*)]$. Aggregate entry and the market-clearing price therefore remain deterministic, and the baseline fixed-point condition: 
\[
S^* = G(P(S^*,S^*))
\]
continues to characterize equilibria. The large developer's observable quantity commitment shifts the set of feasible equilibria in exactly the same way as under complete information. 

A genuinely different Bayesian extension would require some source of aggregate uncertainty, such as a finite number of developers, correlated costs, an unknown common fundamental, or uncertainty about the distribution $G$ itself. In such environments, developers would form beliefs over aggregate entry, and Bayesian Nash equilibrium machinery would become necessary.

\textbf{Homogeneous Units.} The last assumption made for tractability was specifying that all units in the market are identical in quality, location, features, amenities, etc and therefore all sell at the same market-clearing price. This assumption ensures we have a single fixed-point condition for all units, as opposed to a vector of prices across housing types and fixed-point conditions for each market segment. The welfare expression would also fragment from a single integral to a sum across segments. While the question of whether constructing housing units of one type (say, market-rate housing) affects (positively or negatively) the price of other types of units (e.g. low-income housing) is hotly debated, the coordination mechanism does not depend on resolving this tension. The core logic -- that density raises willingness to pay through network effects, and that a large developer's commitment can push the market past a low-supply trap -- operates at the neighborhood level regardless of how many unit types exist. Even if a large developer enters a market with heterogeneous housing types but builds exclusively market-rate units, the resulting increase in density raises willingness to pay across all housing types through the network effect, making entry profitable for developers of complementary types and breaking out of the low equilibrium. If anything, heterogeneity may strengthen the coordination failure, since developers specializing in one housing type face additional uncertainty about whether complementary types will enter.

\subsection{Empirical Extension}
\label{subsec:empirical}

The model's predictions can be tested, since its key objects are potentially estimable from housing market data. This section outlines what such an empirical calibration would require and what observable implications it would generate. 

\textbf{Parameter Identification.} The model's key parameters are network effect strength $\gamma$, the convexity parameter $\alpha$, price sensitivity $\beta$, and the cost distribution $G$. As discussed in Section~\ref{sec:literature}, \citet{rossi_hansberg_sarte_owens_2010} estimate the magnitude and spatial decay of housing externalities from parcel-level revitalization programs in Richmond, VA. The Rossi estimation strategy can hence be used to constrain $\gamma$. The price sensitivity parameter $\beta$ can be recovered from hedonic demand estimation using transaction-level data. Regressing observed sale prices on housing characteristics and local supply yields an estimate of how equilibrium prices respond to additional units entering a neighborhood, which corresponds to $\frac{d}{dS}P(S,S)=\frac{\partial P}{\partial Q^D}(S,S)+\frac{\partial P}{\partial S}(S,S)$ in the model. The main identification challenge lies in extricating the demand slope from the network effect, since both affect price when supply changes. The cost distribution $G$ can in principle be estimated from permit-level construction cost records, which are collected by most municipalities. There is an obvious selection issue at play here. The only observed costs are those that correspond to projects that were built, truncating the distribution at the equilibrium entry threshold $P(S^*,S^*)$ since developers do not disclose their costs in the permitting process if they choose not to build. Recovering the full distribution $G$ would therefore require estimating a truncated distribution model that accounts for the known censoring point $P(S^*,S^*)$, or supplementing permit data with developer surveys on projects considered but not pursued. Finally, the convexity parameter $\alpha$ is the hardest to pin down. \citet{couture_handbury_2020} provide qualitative evidence that $\alpha > 1$ might be possible but identifying the precise value of $\alpha$ would require new estimation. Specifically, estimating how the marginal effect of density on willingness to pay varies with the level of density, which is a nonlinear relationship that would need rich micro data on housing transactions across neighborhoods at different density levels. With these parameters estimated, we can test the $\gamma \alpha (S_{\text{high}})^{\alpha - 1} > 1$ threshold condition.

\textbf{Candidate Markets.} The model predicts coordination failures exist in  urban areas with convex network effects; neighborhoods near transit, job centers, or active commercial corridors. A natural experimental design to test the predictions of the model is to compare neighborhoods that received similar regulatory treatment (e.g. upzoning), ideally in the same metro area, but with varying developer structures. If the neighborhood with a large, potentially first-mover developer experienced a discontinuous jump in total supply while the neighborhood without one remained at low supply despite identical regulatory treatment, that is evidence for the coordination mechanism. This design isolates the model's contribution: regulation alone is necessary but not sufficient, and the presence of a large committed developer is what resolves the coordination failure. The Chicago versus Auckland comparison discussed in Section~\ref{sec:intro} is suggestive of this logic, but a more rigorous test would compare neighborhoods within a single metro where $\gamma, \alpha, \beta, \text{and } G$ are plausibly similar.

\textbf{Observable Implications.} Beyond the candidate experimental design above, the model generates several testable predictions that can be evaluated against observed development patterns. First, the effect of developer size on total construction should be nonlinear. Neighborhoods where a developer commits past $S_{\text{unstable}}$ should experience discrete jumps in total supply, rather than the gradual increases implied by a standard competitive model. Second, the relationship between developer market concentration and total housing supply should depend on the local strength of network effects. In dense urban neighborhoods where convex network effects are plausible, 
concentration should be associated with higher total supply, as the 
developer expands beyond $\Shigh$ into the monopoly region. In suburban 
or exurban markets where network effects are weak, the developer 
guarantees $\Shigh$ but does not expand beyond it. Concentration 
changes the composition of who builds (displacing atomistic developers) 
without affecting total supply or prices. Third, neighborhoods that transition from low to high supply should do so rapidly once a sufficiently large developer enters, with atomistic developers following shortly after. Gradual, uniform growth across neighborhoods of similar regulatory environments would be evidence against the coordination mechanism.

\section{Policy Implications}
\label{sec:policy}
While the model specification is rather simplistic and is rarely replicated in complex housing markets, its conclusions yield insights that may be valuable to policymaking bodies. The existence of a market player like the large developer in this paper depends entirely on policy decisions. Local planners can grant certain developers the right of first refusal or other similar incentives to catalyze development in certain areas. If an area is not yet sufficiently desirable for a Stackelberg leader to begin development despite first-mover advantage, local authorities could subsidize construction or backstop losses to incentivize denser construction. In locales with stronger demand for housing, policymakers may simply upzone certain tracts to usher in a flood of development. 

Acknowledging that the conditions for a large developer to play this role may not exist in many municipalities, public-private partnerships could play the first-mover role, committing to initial development to trigger the high equilibrium. A city that constructs public housing as an anchor for a new district or a redevelopment agency that assembles land and installs infrastructure before soliciting private developers. Each of these acts as the Stackelberg leader in the model, absorbing the coordination risk that no individual atomistic developer  would bear alone. The mechanism does not require the public entity to 
build the entire neighborhood; as shown in Section~\ref{subsec:displacement}, it suffices to build past the unstable equilibrium threshold to 
guarantee that private developers fill in the rest.

The model also provides a diagnostic for when concentration generates 
supply beyond what the atomistic high equilibrium would deliver. 
Under specified demand, the large developer builds past the high 
equilibrium when $\gamma \alpha (\Shigh)^{\alpha - 1} > 1$: network effects 
must be strong (high $\gamma$), convex (high $\alpha$), and the market 
must be of sufficient scale (high $S^*$). This condition is most 
plausibly satisfied in dense urban neighborhoods with agglomeration 
benefits. Such neighborhoods include areas near transit, employment 
centers, or existing commercial corridors where each additional unit of 
housing materially improves the neighborhood's amenity value. 
Transit-oriented developments, opportunity zones, and master-planned 
communities are all settings where convex network effects are 
empirically plausible, and where the coordination failure this paper 
identifies is most likely to bind.

When this condition fails, the developer's optimal choice is 
$S_L^* = \Shigh$, and total supply equals exactly what the atomistic 
market would achieve if it coordinated on the high equilibrium. The 
developer's contribution is limited to equilibrium selection: 
guaranteeing $\Shigh$ rather than risking $\Slow$. The sequential 
structure of the game prevents the standard monopoly distortion. If 
the developer chose $S_L < \Shigh$, atomistic developers would fill the 
gap, so supply restriction is not feasible. In suburban or exurban 
markets where density generates few amenity benefits, granting 
first-mover advantage resolves coordination failure but provides no 
additional supply expansion; the welfare gain is entirely the avoided 
loss from the low-supply trap. The key policy implication is therefore 
not that market concentration is universally beneficial, but that its 
benefits are largest in markets where network effects are strong enough 
for the developer to expand supply beyond the high equilibrium, and 
smallest (though still non-negative) in markets where the developer 
merely displaces atomistic builders at $\Shigh$.

\section{Conclusion}
\label{sec:conclusion}
This paper studies coordination failures in housing development markets with network effects. When the marginal benefit of density is convex in supply, a market with exclusively atomistic developers admits multiple stable equilibria, a high-supply equilibrium and a low-supply coordination failure. The specific equilibrium that materializes is completely random, resulting in significant welfare consequences. 

I show that introducing a large developer that moves first and selects sites from the available stock resolves both problems. The large developer always commits at least to the high-supply equilibrium, pushing past the unstable threshold and resolving the coordination failure. This result holds for general demand functions satisfying the regularity conditions of the model and cost distributions, does not depend on any  equilibrium selection assumption, and survives under capacity constraints. Assuming positive network  and negative price effects, the high equilibrium strictly improves welfare compared to the low equilibrium. However, even the large developer underprovisions housing relative to the social optimum because developers cannot charge existing residents for the positive externality (via network effects) that new construction generates. 

The model relies on several tractability assumptions discussed in Section~\ref{subsec:limitations}: simultaneous atomistic entry, complete information, and homogeneous housing units. In each case, the coordination mechanism is likely robust to relaxation, though the formal results would require richer equilibrium concepts. The large developer displaces atomistic developers in equilibrium rather than building alongside them, but the coordination mechanism survives under capacity constraints (Section~\ref{subsec:displacement}). Section~\ref{subsec:alt_cost} considers an alternative environment in which the leader produces with a separate constant-cost technology rather than drawing from the common stock of sites. In that environment, any welfare gain decomposes into a coordination effect and a production-efficiency effect, and the unconditional result no longer carries over.

The results of this paper can be brought to data. Section~\ref{subsec:empirical} outlines a path toward empirical calibration: the network effect parameter $\gamma$ can be bounded by existing externality estimates, the price sensitivity $\beta$ by hedonic demand estimation, and the cost distribution $G$ by permit-level records. The threshold condition $\gamma \alpha (S_{\text{high}})^{\alpha - 1} > 1$ is directly testable once these objects are estimated. A natural design compares similarly regulated neighborhoods with different developer structures to isolate the coordination mechanism from pure regulatory effects.

\bibliographystyle{aer}
\bibliography{references}

@article{rochet_tirole_2003,
    author = {Rochet, Jean-Charles and Tirole, Jean},
    title = {COMPETITION IN TWO-SIDED MARKETS},
    journal = {Journal of the European Economic Association},
    volume = {1},
    number = {4},
    pages = {990--1029},
    year = {2003}
}

@article{armstrong_2006,
    author = {Armstrong, Mark},
    title = {Competition in Two-Sided Markets},
    journal = {The RAND Journal of Economics},
    volume = {37},
    number = {3},
    pages = {668--691},
    year = {2006}
}

@article{rochet_tirole_2006,
    author = {Rochet, Jean-Charles and Tirole, Jean},
    title = {Two-Sided Markets: A Progress Report},
    journal = {The RAND Journal of Economics},
    volume = {37},
    number = {3},
    pages = {645--667},
    year = {2006}
}

@article{weyl_2010,
    author = {Weyl, E. Glen},
    title = {A Price Theory of Multi-Sided Platforms},
    journal = {American Economic Review},
    volume = {100},
    number = {4},
    pages = {1642--1672},
    year = {2010}
}

@article{spence_1975,
    author = {Spence, A. Michael},
    title = {Monopoly, Quality, and Regulation},
    journal = {The Bell Journal of Economics},
    volume = {6},
    number = {2},
    pages = {417--429},
    year = {1975}
}

@article{economides_1996,
    author = {Economides, Nicholas},
    title = {Network Externalities, Complementarities, and Invitations to Enter},
    journal = {European Journal of Political Economy},
    volume = {12},
    number = {2},
    pages = {211--233},
    year = {1996}
}

@article{card_mas_rothstein_2008,
    author = {Card, David and Mas, Alexandre and Rothstein, Jesse},
    title = {Tipping and the Dynamics of Segregation},
    journal = {The Quarterly Journal of Economics},
    volume = {123},
    number = {1},
    pages = {177--218},
    year = {2008}
}

@article{cooper_john_1988,
  author  = {Cooper, Russell and John, Andrew},
  title   = {Coordinating Coordination Failures in {K}eynesian Models},
  journal = {The Quarterly Journal of Economics},
  volume  = {103},
  number  = {3},
  pages   = {441--463},
  year    = {1988}
}

@article{murphy_shleifer_vishny_1989,
  author  = {Murphy, Kevin M. and Shleifer, Andrei and Vishny, Robert W.},
  title   = {Industrialization and the Big Push},
  journal = {Journal of Political Economy},
  volume  = {97},
  number  = {5},
  pages   = {1003--1026},
  year    = {1989}
}

@article{rossi_hansberg_sarte_owens_2010,
  author  = {Rossi-Hansberg, Esteban and Sarte, Pierre-Daniel and Owens, Raymond},
  title   = {Housing Externalities},
  journal = {Journal of Political Economy},
  volume  = {118},
  number  = {3},
  pages   = {485--535},
  year    = {2010}
}

@techreport{owens_rossi_hansberg_sarte_2017,
  author       = {Owens, Raymond and Rossi-Hansberg, Esteban and Sarte, Pierre-Daniel},
  title        = {Rethinking Detroit},
  institution  = {National Bureau of Economic Research},
  type         = {Working Paper},
  number       = {23146},
  year         = {2017},
  url          = {https://www.nber.org/papers/w23146}
}

@article{glaeser_kolko_saiz_2001,
  author  = {Glaeser, Edward L. and Kolko, Jed and Saiz, Albert},
  title   = {Consumer City},
  journal = {Journal of Economic Geography},
  volume  = {1},
  number  = {1},
  pages   = {27--50},
  year    = {2001}
}

@article{couture_handbury_2020,
  author  = {Couture, Victor and Handbury, Jessie},
  title   = {Urban Revival in {A}merica},
  journal = {Journal of Urban Economics},
  volume  = {119},
  pages   = {103267},
  year    = {2020}
}

@book{schelling_1960,
  author    = {Schelling, Thomas C.},
  title     = {The Strategy of Conflict},
  publisher = {Harvard University Press},
  address   = {Cambridge, MA},
  year      = {1960}
}

@article{lipsey_lancaster_1956,
  author  = {Lipsey, R. G. and Lancaster, Kelvin},
  title   = {The General Theory of Second Best},
  journal = {The Review of Economic Studies},
  volume  = {24},
  number  = {1},
  pages   = {11--32},
  year    = {1956}
}

@article{helsley_strange_1994,
  author  = {Helsley, Robert W. and Strange, William C.},
  title   = {City Formation with Commitment},
  journal = {Regional Science and Urban Economics},
  volume  = {24},
  number  = {3},
  pages   = {373--390},
  year    = {1994}
}

@article{guerrieri_hartley_hurst_2013,
  author  = {Guerrieri, Veronica and Hartley, Daniel and Hurst, Erik},
  title   = {Endogenous Gentrification and Housing Price Dynamics},
  journal = {Journal of Public Economics},
  volume  = {100},
  pages   = {45--60},
  year    = {2013}
}

@article{freemark_2020,
  author  = {Freemark, Yonah},
  title   = {Upzoning {C}hicago: Impacts of a Zoning Reform on Property Values and Housing Construction},
  journal = {Urban Affairs Review},
  volume  = {56},
  number  = {3},
  pages   = {758--789},
  year    = {2020}
}

@article{greenaway_mcgrevy_phillips_2023,
  author  = {Greenaway-McGrevy, Ryan and Phillips, Peter C. B.},
  title   = {The Impact of Upzoning on Housing Construction in {A}uckland},
  journal = {Journal of Urban Economics},
  volume  = {136},
  pages   = {103555},
  year    = {2023}
}

\end{document}